\newcommand\vldbdoi{XX.XX/XXX.XX}
\newcommand\vldbpages{XXX-XXX}
\newcommand\vldbvolume{14}
\newcommand\vldbissue{1}
\newcommand\vldbyear{2020}
\newcommand\vldbauthors{\authors}
\newcommand\vldbtitle{\shorttitle} 
\newcommand\vldbavailabilityurl{URL_TO_YOUR_ARTIFACTS}
\newcommand\vldbpagestyle{plain} 
\pgfplotsset{compat=1.16}
\DeclareMathAlphabet{\mathpzc}{OT1}{pzc}{m}{it}
\newcommand{\rank}{\mathsf{rank}}
\newcommand{\select}{\mathsf{select}}
\newcommand{\predecessor}{\mathsf{predecessor}}
\newcommand{\successor}{\mathsf{successor}}
\newcommand{\set}{\ensuremath{S}\xspace}
\newcommand{\univSize}{u}
\newcommand{\setSize}{n}
\newcommand{\wordSize}{w}
\newcommand{\true}{\mathsf{true}}
\newcommand{\false}{\mathsf{false}}
\newcommand{\gap}[1]{\mathtt{gap}(#1)}
\newcommand{\lowerB}{\mathcal{B}(\setSize, \univSize)}
\newcommand{\rle}[1]{\mathtt{rle}(#1)}
\renewcommand{\log}{\lg}
\newcommand{\bit}[1]{\ensuremath{\mathbf{#1}}}
\newcommand{\bigoh}[1]{\ensuremath{\mathrm{O}(#1)}}
\newcommand{\PT}{\ensuremath{\mathtt{PT}}\xspace}
\newcommand{\Cn}{\ensuremath{C^{(n)}}}
\newcommand{\cbv}{C_{\!\set}}
\newcommand{\hzrun}[1]{\ensuremath{H_0^{\mathrm{run}}(#1)}}
\newcommand{\hzgap}[1]{\ensuremath{H_0^{\mathrm{gap}}(#1)}}
\newcommand{\bintrie}[1]{\ensuremath{\mathsf{bintrie}(#1)}}
\newcommand{\certificate}[1]{\ensuremath{\mathsf{cert}(#1)}}
\newcommand{\trie}[1]{\ensuremath{\mathtt{trie}(#1)}}
\newcommand{\trieRun}[1]{\ensuremath{\mathtt{rTrie}(#1)}}
\newcommand{\comentar}[1]{}
\newcommand{\patrascu}{{P}\v{a}tra\c{s}cu\xspace}
\begin{document}
\title{Trie-Compressed Intersectable Sets}

\author{Diego Arroyuelo}
\orcid{0000-0002-2509-8097}
\affiliation{%
  \institution{Dept.~of Informatics, Universidad T\'ecnica Federico Santa \\ Mar\'ia, Chile, \& Millennium Institute for Foundational Research on Data, Chile}
  \streetaddress{Vicu\~na Mackenna 3939}
}
\email{darroyue@inf.utfsm.cl}

\author{Juan Pablo Castillo}
\affiliation{%
  \institution{Dept.~of Informatics, Universidad T\'ecnica Federico Santa \\ Mar\'ia, Chile, \& Millennium Institute for Foundational Research on Data, Chile}
}
\email{juan.castillog@sansano.usm.cl}

\begin{abstract}
We introduce space- and time-efficient algorithms and data structures for the offline set intersection problem. 
We show that a sorted integer set $S \subseteq [0{..}u)$ of $n$ elements can be represented using compressed space while supporting $k$-way intersections in adaptive $\bigoh{k\delta\lg{\!(u/\delta)}}$ time, $\delta$ being the alternation measure introduced by Barbay and Kenyon. Our experimental results suggest that our approaches are competitive in practice, outperforming the most efficient alternatives (Partitioned Elias-Fano indexes, Roaring Bitmaps, and Recursive Universe Partitioning (RUP)) in several scenarios, offering in general relevant space-time trade-offs.  
\end{abstract}

\maketitle

\pagestyle{\vldbpagestyle}
\begingroup\small\noindent\raggedright\textbf{PVLDB Reference Format:}\\
\vldbauthors. \vldbtitle. PVLDB, \vldbvolume(\vldbissue): \vldbpages, \vldbyear.\\
\href{https://doi.org/\vldbdoi}{doi:\vldbdoi}
\endgroup
\begingroup
\renewcommand\thefootnote{}\footnote{\noindent
This work is licensed under the Creative Commons BY-NC-ND 4.0 International License. Visit \url{https://creativecommons.org/licenses/by-nc-nd/4.0/} to view a copy of this license. For any use beyond those covered by this license, obtain permission by emailing \href{mailto:info@vldb.org}{info@vldb.org}. Copyright is held by the owner/author(s). Publication rights licensed to the VLDB Endowment. \\
\raggedright Proceedings of the VLDB Endowment, Vol. \vldbvolume, No. \vldbissue\ %
ISSN 2150-8097. \\
\href{https://doi.org/\vldbdoi}{doi:\vldbdoi} \\
}\addtocounter{footnote}{-1}\endgroup

\ifdefempty{\vldbavailabilityurl}{}{
\vspace{.3cm}
\begingroup\small\noindent\raggedright\textbf{PVLDB Artifact Availability:}\\
The source code, data, and/or other artifacts have been made available at \url{https://github.com/jpcastillog/compressed-binary-tries}.
\endgroup
}

\section{Introduction}
\label{section:introduction}

\emph{Sets} are one of the most fundamental mathematical concepts related to the storage of data. Operations such as intersections, unions, and differences are key for querying them. For instance, the use of logical \texttt{AND} and \texttt{OR} operators in web search engines translate into intersections and unions, respectively. Representing sets to support their basic operations efficiently has been a major concern since many decades ago \cite{AHU74}. In several applications, such as query processing in information retrieval \cite{BCC2010} and database management systems \cite{ENbook11}, sets are known in advance to queries, hence data structures can be built to speed up query processing. 
With this motivation, in this paper we focus on the following problem.
\begin{definition}[The \emph{Offline Set Intersection Problem}, \textbf{OSIP}] 
Let $\mathcal{S} = \{S_1,\ldots , S_N\}$ be a family of $N$ integer sets, each of size $|S_i| = \setSize_i$ and universe $[0{..}\univSize)$. The OSIP consist in preprocessing this family to support query instances of the form $\mathcal{Q} = \{i_1, \ldots, i_k\} \subseteq [1{..}N]$,
which ask to compute $\mathcal{I}(\mathcal{Q}) = \bigcap_{i_j\in \mathcal{Q}}{S_{i_j}}$.
\end{definition}
Typical applications of this problem include the efficient support of join operations in databases \cite{ENbook11,Vicdt14}, query processing using inverted indexes in Information Retrieval (IR) \cite{BCC2010,MG}, and computational biology \cite{/LQipeee17}, among others.
Building a data structure to speed up intersections, however, increases the space usage.
Today, data-intensive applications compel not only for time-efficient solutions: space-efficient solutions allow one to maintain data structures of very large datasets entirely in main memory, thus avoiding accesses to slower secondary storage. Compact, succinct, and compressed data structures have been the solution to this problem in the last few decades \cite{Navarro16}. We study here compressed data structures to efficiently support the OSIP.

We assume the word RAM model of computation with word size 
$\wordSize = \Theta(\lg{\univSize})$. Arithmetic, logic, and bitwise operations, as well as accesses to $w$-bit memory cells, take $\bigoh{1}$ time.

The set intersection problem has been studied in depth, resulting in a plethora of algorithms and data structures that we briefly review in what follows. First, let us look at the online version of the problem, where sets to be intersected are given at query time, not before, so there is no time to preprocess them. Algorithms like the ones by Baeza-Yates \cite{BYcpm04}, Demaine, Lopez-Ortiz, and Munro \cite{DLMsoda00}, and Barbay and Kenyon \cite{BKtalg08} are among the most efficient approaches. In particular, the two latter algorithms are adaptive, meaning that they are able to perform faster on ``easier'' query instances. In particular, Barbay and Kenyon's algorithm runs in optimal $\bigoh{\delta\sum_{i=1}^{k}{\lg{\!(n_i/\delta)}}}$, where $\delta$ is the so-called alternation measure, which quantifies the query difficulty. The algorithm by Demaine et al.~\cite{DLMsoda00} has running time $\bigoh{k\delta\lg{\!(n/\delta)}}$ (where $n = \sum_{i\in\mathcal{Q}}{n_i}$), which is optimal when $\max_{i\in \mathcal{Q}}{\{\lg{n_i}\}} = \bigoh{\min_{i\in\mathcal{Q}}{\{\lg{n_i}\}}}$ \cite{BKsoda02}. These algorithms require sets to be stored in plain form, in a sorted array of $\Theta(n_i\lg{u})$ bits, $i=1,\ldots,N$. This can be excessive when dealing with large databases. An interesting question is whether one can support the OSIP in time proportional to $\delta$, while using compressed space. 

On the offline version of the problem, we can cite the vast literature on inverted indexes \cite{MG,ZMcsur06,BCC2010,PVcsur21}, where the main focus is on practical space-efficient set representations supporting intersections. Although there are highly-efficient approaches in practice within these lines, it is hard to find worst-case guarantees both in space usage and query time. One notorious exception is the data structure by Ding and Konig \cite{DKpvldb11}, which computes intersection in $\bigoh{n/\sqrt{w} + k|\mathcal{I(Q)}|}$ time, and uses linear space. According to Ding and Konig's experiments, the space can be improved to use bout 1.88 times the space of an Elias $\gamma$/$\delta$ compressed inverted index \cite{DKpvldb11}.

\comentar{There are papers studying the offline intersection problem, as for instance:
\begin{itemize}
\item Cohen and Porat \cite{CPtcs10}: segun el paper de Chen and Shen, el espacio es
$\bigoh{|T| |D|}$, pero no me queda claro (y no lo dicen) cual es el tamaño
$|T|$ del árbol que construyen. Tampoco queda claro que es $|D|$: es la cantidad
de conjuntos en la base de datos, es decir, $N$? O es el tamaño del universo, $u$? 

\item Chen and Shen \cite{CStcs16}: en el peor caso no pueden mejorar demasiado
respecto a la cota inferior del modelo de comparaciones. 
Sin embargo, puede ser bastante mejor que eso. En definitiva, ese paper muestra
que con las estructuras adecuadas se pueden lograr mejores tiempos que el modelo
de comparaciones, pero en peor caso llegan justo a eso.
Las cotas que dan no son muy precisas, mas alla de que tienen cota superior de peor caso.
Pero solo pueden decir que en general es mucho mejor que el peor caso. Yo con lo mio
puedo precisar mucho mas esas cotas.
Su metodo boils down a una interseccion online sobre un conjunto de intervalos que, sí,
pueden ser bastante mas pequeños que los conjuntos originales. Pero en el peor caso no
lo son. Tambien estudian una manera de usar estructuras de datos de tipo LCA para reducir
el tiempo, pero siguen teniendo mal peor caso. 
\end{itemize}
}

In this paper we revisit the classical set representation and intersection algorithm by Trabb-Pardo \cite{TrabbPardo}. To represent a set $S$, Trabb-Pardo inserts the $w$-bit binary codes of each $x\in S$ into a binary trie data structure \cite{Fcacm60}. Intersection on this representation can be computed traversing all the tries corresponding to the query sets in coordination. We will show that an approach like this allows for a space- and time-efficient implementation, with theoretical guarantees and competitive practical performance. Next, we enumerate our contributions.


\subsection{Contributions}

The main contributions of this paper are as follows:
\begin{itemize}
    \item We introduce the concept of \emph{trie certificate} of an instance of the intersection problem. This is analogous to the proofs by Demaine et al.~\cite{DLMsoda00} and partition certificates by Barbay and Kenyon \cite{BKtalg08}, which were introduced to analyze their adaptive intersection algorithms. See Definition \ref{def:trie-certificate}.
    
    \item We use trie certificates to show that the classical intersection algorithm by Trabb-Pardo \cite{TrabbPardo} is actually an adaptive intersection algorithm. In particular, we show that its running time to compute $\mathcal{I(Q)}$ is $\bigoh{k\delta\lg{(u/\delta)}}$, where $\delta$ is the alternation measure by Barbay and Kenyon \cite{BKtalg08}. See Theorem \ref{theor:TP-adaptive}
    
    \item We introduce compact trie data structures that support the navigation operations needed to implement the intersection algorithm. We then show that these compact data structures yield a compressed representations of the sets, bounding their size by the trie entropy from Klein and Shapira \cite{KSdcc02} (and a variant we introduce in Definition \ref{def:rTrie}). See Theorems \ref{theor:compressed-TP} and \ref{theor:run-compressed-TP}.

    \item We show preliminary experimental results that indicate that our approaches are appealing not only in theory, but also in practice, outperforming the most competitive state-of-the-art approaches in several scenarios. For instance, we show that our algorithm computes intersections 1.55--3.32 times faster than highly-competitive Partitioned Elias-Fano indexes \cite{OVsigir14}, using 1.15--1.72 times their space. We also compare with Roaring Bitmaps \cite{LKKDOSKspe18}, being 1.07--1.91 times faster while using about 0.48--1.01 times their space. Finally, we compared to RUP \cite{Pdcc21}, being 1.04--2.18 times faster while using 0.82--1.19 times its space. 
\end{itemize}
Overall, our work seems to be a step forward in bridging the gap between theory and practice in this important line of research. 



\section{Preliminaries and Related Work} \label{sec:previous-work}

\subsection{Operations of Interest}

Besides intersections (and other set operations like unions and differences), there are several other operations one would want to support on an integer set. The following are among the most fundamental operations:
\begin{itemize}
    \item $\rank(S, x)$: for $x \in [0{..}u)$, yields $|\{y \in S,~y\le x\}|$.
    \item $\select(S, j)$: for $1\le j\le |S|$, yields $x \in S$ s.t.~$\rank(S,x)=j$.
    \item $\predecessor(S, x)$: for $x \in [0{..}u)$, yields $\max{\{y \in S,~y\le x\}}$.
    \item $\successor(S, x)$: for $x \in [0{..}u)$, yields $\min{\{y\in S,~y \ge x\}}$.
\end{itemize}
A set $S$ can be also described using its \emph{characteristic bit vector} (cbv, for short) $\cbv[0{..}u)$, such that $\cbv[x_i] = \bit{1}$ if $x_i \in S$, $\cbv[x_i] = \bit{0}$ otherwise. On the cbv $\cbv$ we define the following operations:
\begin{itemize}
    \item $\cbv.\rank_\bit{1}(x)$: for $x\in [0{..}u)$,  yields the number of $\bit{1}$s in $\cbv[0{..}x]$.
    \item $\cbv.\select_{\bit{1}}(k)$: for $1\le j\le |S|$, yields the position $0\le x <u$ s.t.~$\cbv.\rank(x)=j$.
\end{itemize}
Notice that $\rank(S, x) \equiv \cbv.\rank_{\bit{1}}(x)$ and $\select(S, j) \equiv \cbv.\select_{\bit{1}}(j)$.

\comentar{
Set operations of interest:
\begin{enumerate}
    \item Set intersection
    \item Set union: lo nuestro retorna tries, por lo que se pueden componer operaciones. 
    
    \item Decompression (notar que esto debería ser en tiempo $\bigoh{\trie{S}}$, recorriendo \trie{S} en preorden). Quizas aca no seamos competitivos porque el tiempo es proporcional a la cantidad total de bits que hay que producir (igual no me queda claro esto, porque al final se recorre el trie usando un entero manejado como pila, el cual se imprime cuando se llega a una hoja del trie). Es probable que, por ejemplo, Elias $\gamma$ y $\delta$ sean más rápidos. Ni hablar de los métodos como vbyte, s9, p4d. Igual estos usan mucho espacio, por ejemplo tienen que mantener un arreglo extensible para descomprimir, que puede aumentar mucho el espacio adicional. 
    
    \item Range Decompression: podemos descomprimir un rango del conjunto, por ejemplo del $i$-esimo al $j$-esimo elemento ($i\le j$), pero tambien un rango del universo (range reporting: dado un subintervalo del universo, extraer los elementos que estan alli dentro, si hay alguno). 
    
    \item rank, select: somos capaces de hacer rank y select en tiempo $\bigoh{\lg{u}}$. Esos son unos $\sim$25 ranks y selects. Si pensamos que un rank toma $\sim$30nsecs--50nsecs, el tiempo es $\sim$750nsecs--1250nsecs. Algo elevado, pero habría que ver el tiempo de los otros. Una cosa importante es que si hay runs, es probable que sea un poquito mas rápido. Tener en cuenta esto (que es el único baseline que tengo ahora): la estructura de Manuel Calquin sobre las listas de al menos 100 000 elementos, hace rank en 593nsecs y usa 4.50 bits por entero. También lo resuelve en 811nsecs usando 3.76 bits por entero. Nosotros usamos 3.26--3.70 bits por entero, un poco menos, habría que ver cómo nos va con el tiempo. Tambien hay que esperar los resultados de PEF, que estoy trabajando aparte para tener esos resultados.

    \item predecessor/successor: otras estructuras necesitan hacer rank/select para resolver esto, nosotros no, simplemente bajando en el trie. Aquí podemos tener ventaja.
    
    
    \item Next greater or equal (se puede ademas usar un iterador para no tener que navegar desde la raiz): esto es similar a lo anterior, no? Creo que esto se puede presentar como un select next? 
    
    
\end{enumerate}

TODO: para esas operaciones (o algunas de ellas), sacar en limpio lo que implicaria en la competencia para ver si valdria la pena o no.
}

\subsection{Set Compression Measures}

A \emph{compression measure} (or \emph{entropy}) quantifies the amount of bits needed to encode data using a particular compression model. 
For an integer universe $U = [0{..}u)$, let $\Cn\subseteq 2^{U}$ denote the class of all sets $S \subseteq U$ such that $|S| = n$. In this section, we assume $S=\{x_1,\ldots, x_n\}$, for $0\le x_1 \le \cdots \le x_n < \univSize$. We review next typical compression measures for integer sets $S \in \Cn$, that will be of interest for our work. For set $S$, let $\cbv [0..\univSize)$ denote its \emph{characteristic bit vector} (cbv for short), such that $\cbv [x] = \bit{1}$ iff $x\in \set$, $\cbv[x] = \bit{0}$ otherwise.

\subsubsection{The Information-Theoretic Worst-Case Lower Bound.}
A worst-case lower bound on the number of bits needed to represent any $S\in \Cn$ can be
obtained using the following information-theoretic argument. As $|\Cn| = \binom{u}{n}$, at 
least $\mathcal{B}(n, u)  =  \lceil \lg{\binom{u}{n}}\rceil$ bits are needed to differentiate a given $S\in \Cn$ among all other possible sets. This bound is tight, as there are set representations achieving this bound \cite{RRR07,Patrascu08}.
If $n \ll u$, $ \mathcal{B}(n, u)\approx n\lg{e} + n\lg{\frac{u}{n}} - \bigoh{\lg{u}}$ bits (using the Stirling approximation of $n!$). This is just a worst-case lower bound: some sets in $\Cn$ can be represented using less bits, as we will see next. 

\subsubsection{The $\gap{S}$ Compression Measure.}
A well-studied compression measure is $\gap{S}$, defined as follows.  Let us denote $g_{1} = x_{1}$ and, for $i = 2, \ldots, n$, define $g_{i} = x_{i} - x_{i-1}-1$. Then, we define the $\gap{S}$ measure as follows:
$$\gap{\set} = \sum_{i=1}^{\setSize}{\lfloor \lg{g_i}\rfloor + 1}.$$
Note that $\gap{S}$ is the amount of bits required to represent $S$ provided we encode the sequence of gaps $\mathcal{G} = \langle g_1, \ldots, g_n\rangle$, using $\lfloor \lg{g_{i}}\rfloor+1$ bits per gap.
This measure exploits the variation in the gaps between consecutive set elements: the closer the elements, the smallest this measure is.
The following lemma upper bounds $\gap{S}$:
\begin{lemma} \label{lemma:gap-vs-wc-lower-bound}
Given a set $S\subseteq [0{..}u)$ of $n$ elements, it holds that $\gap{S} \le \mathcal{B}(n, u)$, with equality only when $g_{i} = \frac{u}{n}$ (for $i = 1,\ldots,n$). 
\end{lemma}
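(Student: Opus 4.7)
The plan is to sandwich $\gap{\set}$ between a Jensen-based upper bound and a Stirling-based lower bound on $\mathcal{B}(n,u)$. First I would establish the single linking constraint by telescoping:
\[\sum_{i=1}^n g_i \;=\; x_1 + \sum_{i=2}^n (x_i - x_{i-1} - 1) \;=\; x_n - (n-1) \;\le\; u - n,\]
so $\sum_{i=1}^n (g_i + 1) \le u$. Since $\lfloor \lg g_i\rfloor + 1 = \lceil \lg(g_i+1)\rceil \le \lg(g_i+1) + 1$, I would then invoke Jensen's inequality for the concave function $\lg$ to obtain
\[\gap{\set} \;\le\; n + \sum_{i=1}^n \lg(g_i+1) \;\le\; n + n\lg\!\left(\tfrac{1}{n}\textstyle\sum_i(g_i+1)\right) \;\le\; n\lg(u/n) + n,\]
with the Jensen step sharp precisely when all $(g_i+1)$ coincide, which is the uniformly-spaced configuration highlighted by the lemma.

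Next, I would lower-bound $\mathcal{B}(n,u) = \lceil \lg\binom{u}{n}\rceil$ via Stirling's approximation. The standard estimate $n! \le \sqrt{2\pi n}\,(n/e)^n e^{1/(12n)}$ gives $\binom{u}{n} \ge (u-n+1)^n\, e^n / \bigl(\sqrt{2\pi n}\, n^n\, e^{1/(12n)}\bigr)$, and therefore $\mathcal{B}(n,u) \ge n\lg(u/n) + n\lg e - O(\lg n)$ in the regime $n \ll u$, with a separate direct combinatorial estimate covering the regime $n = \Theta(u)$. Because $\lg e > 1$, this leaves a margin of roughly $n(\lg e - 1) \approx 0.44\, n$ over the upper bound on $\gap{\set}$, which comfortably absorbs the $O(\lg n)$ slack for all but a handful of tiny cases; the latter (e.g., $n = 1$, where $\gap{\set} = \lfloor \lg x_1\rfloor + 1 \le \lceil \lg u\rceil = \mathcal{B}(1,u)$) are disposed of by direct inspection.

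The main obstacle I anticipate is the equality clause. Jensen becomes tight exactly when all $(g_i+1)$ are equal, yet at that extremal configuration $\gap{\set}$ still falls short of $\mathcal{B}(n,u)$ by a $\Theta(n)$ additive term coming from the $n\lg e$ contribution in the Stirling lower bound. Hence the ``equality only when $g_i = u/n$'' clause must be read either asymptotically (as matching the leading $n\lg(u/n)$ term) or as pinpointing the unique gap distribution at which the upper-bounding Jensen inequality is tight, rather than as a literal equality between the two finite quantities; I would make this interpretation explicit when writing up the proof.
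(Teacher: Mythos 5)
The paper states this lemma without any proof, so there is nothing of its own to compare against; I am judging your argument on its merits. Your overall plan is reasonable, and your diagnosis of the equality clause is correct and worth keeping: at $g_i=u/n$ one gets $\gap{S}\approx n\lg(u/n)+n$ while $\mathcal{B}(n,u)\approx n\lg(u/n)+n\lg e$, so the two quantities differ by $\Theta(n)$ there, and the ``equality only when $g_i=u/n$'' clause can only be read as identifying where the bound is asymptotically tight (indeed literal equality can occur elsewhere, e.g.\ $n=1$, $u=2$, $x_1=1$). Flagging that explicitly is the right call.

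The genuine gap is quantitative and sits exactly where you wave it away. Your two bounds are $\gap{S}\le n\lg(u/n)+n$ and $\mathcal{B}(n,u)\ge n\lg\frac{u-n+1}{n}+n\lg e-O(\lg n)$, and closing the sandwich requires $n\lg\frac{u}{u-n+1}\le(\lg e-1)n-O(\lg n)\approx 0.44n$. This already fails for $n$ around $u/3$ (where the left side is about $0.585n$), so the deferred ``separate direct combinatorial estimate for $n=\Theta(u)$'' is not a corner case: it must carry essentially the whole range from roughly $u/4$ up to $u/2$, and for $n$ close to $u$ the lemma is arguably false as stated once zero gaps are charged one bit each (which is presumably why the paper's companion RLE lemma assumes $n<u/2$). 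The fix is to tighten your first step rather than lean harder on Stirling: for $g_i\ge 1$ use $\lfloor\lg g_i\rfloor+1\le\lg(2g_i)$ together with $\sum_i g_i=x_n-(n-1)\le u-n$, so Jensen gives $\gap{S}\le n+n\lg\frac{u-n}{n}$. This matches the $n\lg\frac{u-n+1}{n}$ term of the Stirling lower bound up to a negative correction, and the residual requirement $(\lg e-1)n\ge\frac{1}{2}\lg(2\pi n)+O(1)$ holds for all $n$ above a small constant, uniformly in $u$; the handful of tiny $n$, and the convention for $g_i=0$, are then genuinely checkable by hand as you propose.
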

The $\gap{S}$ measure has been traditionally used in applications like inverted-index compression in information retrieval \cite{BCC2010} and databases \cite{MG}.

\subsubsection{The $\rle{S}$ Compression Measure.}
Run-length encoding is a more appropriate model when set elements tend to be clustered into runs of consecutive elements.
Following Arroyuelo and Raman \cite{ARalg22}, 
a \emph{maximal run of successive elements} $R \subseteq \set$ contains $|R| \ge 1$ elements $x_i,x_i+1,\ldots,x_{i}+|R|-1$, such that $x_{i}-1 \not \in \set$ and
$x_{i} + |R| \not \in \set$. Let $R_1,\ldots, R_r$ be the partition of set $\set$ into maximal runs of successive elements, 
such that $\forall x\in R_i, \forall y\in R_j$, $x < y \iff i < j$. 
Let $z_1, \ldots , z_r$ be defined as $z_1 = \min{\{R_1\}}$,
and $z_i = \min{\{R_i\}} - \max{\{R_{i-1}\}} - 1$, and $\ell_1, \ldots, \ell_r$ be such that $\ell_i = |R_i|$, for $i=1,\ldots, r$.
According to these definitions, the cbv of $S$ is denoted as $\cbv [0..\univSize) = \bit{0}^{z_1}\bit{1}^{\ell_1}\bit{0}^{z_2}\bit{1}^{\ell_2}\cdots
\bit{0}^{z_r}\bit{1}^{\ell_r}$, and the set can be represented through the sequences $\mathcal{Z} = \langle z_1, \ldots, z_r\rangle$ and $\mathcal{O} = \langle \ell_1, \ldots, \ell_r\rangle$ of
$2r$ lengths of the alternating 0/1-runs in $\cbv$ (assume wlog that $\cbv$ begins with $\bit{0}$ and ends with $\bit{1}$).  Then:

$$
\rle S = \sum_{i=1}^{r}{\left(\lfloor \log{(z_i-1)} \rfloor + 1\right)} + \sum_{i=1}^{r}{\left(\lfloor \log{(\ell_i-1)} \rfloor + 1\right)}.
$$

\begin{lemma}[\cite{FGGVtalg06}]
Given a set $S\subseteq [0{..}u)$ of $n < \univSize/2$ elements, it holds that $\rle{S} < \lowerB + \setSize + \bigoh{1}$.  
\end{lemma}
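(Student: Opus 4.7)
The plan is to upper bound $\rle{S}$ via Jensen's inequality and lower bound $\lowerB$ via a combinatorial counting argument, then compare the two.

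First, using $\lfloor \log(z_i - 1)\rfloor + 1 \le \log z_i + 1$ (and analogously for $\ell_i$), I would get $\rle{S} \le \sum_i \log z_i + \sum_i \log \ell_i + 2r$. Jensen's inequality applied to the concave $\log$, combined with $\sum_i z_i \le u - n$ and $\sum_i \ell_i = n$, then yields
\[
\rle{S} \;\le\; r\log\frac{u-n}{r} + r\log\frac{n}{r} + 2r.
\]

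Second, I would lower bound $\lowerB$ by counting cbvs of length $u$ with exactly $n$ ones arranged in exactly $r$ maximal runs. A stars-and-bars argument gives $\binom{n-1}{r-1}$ compositions of $n$ into $r$ positive run lengths and $\binom{u-n+1}{r}$ choices for the $r+1$ gap lengths (interior gaps forced $\ge 1$), so $\binom{u}{n} \ge \binom{n-1}{r-1}\binom{u-n+1}{r}$. Applying the Stirling-derived estimate $\log\binom{N}{k} \ge k\log(N/k) + k\log e - \bigoh{\log N}$ to each factor (invoking the symmetry $\binom{N}{k} = \binom{N}{N-k}$ near the boundary to keep the lower-order correction tame) and using $n < u/2 \le u - n$, I obtain
\[
\lowerB \;\ge\; r\log\frac{u-n}{r} + r\log\frac{n}{r} + 2r\log e - \bigoh{r} - \bigoh{\log u}.
\]

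Subtracting the two bounds, the leading entropy-like summands cancel and $\rle{S} - \lowerB \le 2r(1 - \log e) + \bigoh{r} + \bigoh{\log u}$. Since $\log e > 1$ the first term is negative, and with $r \le n$ the remainder is $\bigoh{n + \log u}$; a case split on whether $n \ge \log u$ finishes the target bound $\lowerB + n + \bigoh{1}$, with the small-$n$ regime handled by the direct estimate $\rle{S} \le \bigoh{n\log u}$ against $\lowerB = \Omega(n\log(u/n))$. The main obstacle is controlling the Stirling correction $\bigoh{k^2/(N-k)}$ inside $\log\binom{n-1}{r-1}$ when $r$ is close to $n$ (where that binomial collapses to $1$): one uses the symmetry $\binom{n-1}{r-1} = \binom{n-1}{n-r}$ so that the correction becomes $\bigoh{(n-r)^2/(r-1)} = \bigoh{n-r}$, and the additive $+\setSize$ in the claim is precisely what absorbs the worst-case aggregate of these residuals.
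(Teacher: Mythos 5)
The paper itself gives no proof of this lemma---it is quoted verbatim from Foschini et al.~\cite{FGGVtalg06}---so your argument has to stand on its own, and as written it does not close. The decisive problem is the regime $n$ close to $u/2$ with $r$ close to $n$. Take $u=2n+2$ and $S$ consisting of $n$ pairwise non-adjacent elements, so $r=n$: then $\rle{S}\approx 2n$, and your counting inequality, while correct, collapses to $\binom{u}{n}\ge\binom{n-1}{n-1}\binom{n+3}{n}=\binom{n+3}{3}$, i.e.\ it certifies only $\lowerB\ge 3\log n+\bigoh{1}$ even though $\lowerB\approx 2n$. No amount of care with Stirling can rescue this: the family of sets with exactly $r$ runs is an exponentially small fraction of $\Cn$ here, so the decomposition itself is too lossy to prove $\rle{S}\le\lowerB+n+\bigoh{1}$. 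Relatedly, the estimate $\log\binom{N}{k}\ge k\log(N/k)+k\log e-\bigoh{\log N}$ that produces your $2r\log e$ bonus is simply false when $k=\Theta(N)$ (e.g.\ $\log\binom{2m}{m}=2m-\tfrac12\log m-\bigoh{1}$, whereas the claimed bound gives $\approx 2.44m$); you patch $\binom{n-1}{r-1}$ by symmetry---but then its leading term becomes $(n-r)\log\frac{n-1}{n-r}$ rather than $(r-1)\log\frac{n-1}{r-1}$, so the advertised cancellation against $r\log\frac{n}{r}$ no longer happens---and you do not address the same degeneracy in $\binom{u-n+1}{r}$ at all, where $r/(u-n+1)$ can tend to $1$. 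Finally, even granting every intermediate step, your endgame derives $\rle{S}-\lowerB\le\bigoh{n+\log u}$ with unspecified constants and then appeals to a case split; that is strictly weaker than the target, because the entire content of the lemma is the additive term with constant exactly $1$ on $n$, and comparing $\bigoh{n\log u}$ against $\Omega(n\log(u/n))$ in the small-$n$ case likewise cannot recover a bound of the form $\lowerB+n+\bigoh{1}$.

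For reference, there is a far more elementary route that uses only material already stated in the paper. Grouping the gap encoding by runs, the $n-r$ non-leading elements of the runs contribute one bit each to $\gap{S}$, so $\sum_{i=1}^{r}(\lfloor\log(z_i-1)\rfloor+1)\le\gap{S}-(n-r)$, while $\lfloor\log(\ell_i-1)\rfloor+1\le\ell_i$ gives $\sum_{i=1}^{r}(\lfloor\log(\ell_i-1)\rfloor+1)\le n$. Adding these yields $\rle{S}\le\gap{S}+r\le\lowerB+n$ by Lemma~\ref{lemma:gap-vs-wc-lower-bound}, modulo the usual conventions for unit gaps and unit runs. Your combinatorial identity $\binom{u}{n}\ge\binom{n-1}{r-1}\binom{u-n+1}{r}$ is a nice observation, but it is not the right tool for this bound.
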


\subsubsection{The $\trie{S}$ Compression Measure.} \label{sec:trie-measure}
Let us consider now representing a set $S\in \Cn$ using a binary trie denoted \bintrie{S}, where the $\ell = \lceil\lg{u}\rceil$-bit binary encoding of every element is added. Hence, \bintrie{S} has $|S|$ external nodes, all at depth $\ell$. For an external trie node $v$ corresponding to an element $x_i \in S$, the root-to-$v$ path is hence labeled with the binary encoding of $x_i$. This approach has been used for representing sets since at least the late 70s by Trabb-Pardo \cite{TrabbPardo}, former Knuth's student. Consider the example sets $S_1 = \{1,3,7,8,9,11,12\}$, and $S_2 = \{2,5,7,12,15\}$ over universe $[0{..}16)$, that we shall use as running examples. Figure \ref{fig:binary tries} shows the corresponding tries \bintrie{S_1} and \bintrie{S_2}. 
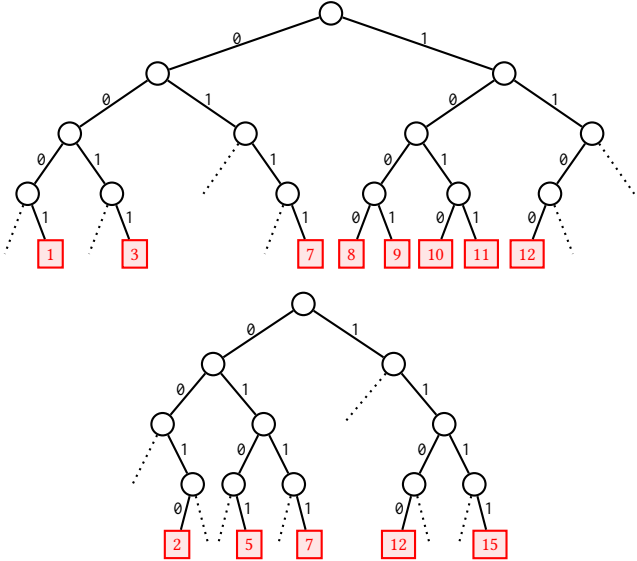
\begin{figure}[ht]
    \centering
\begin{tikzpicture}[
level distance=25mm,
level 1/.style={sibling distance=144mm},
level 2/.style={sibling distance=73mm},
level 3/.style={sibling distance=35mm},
level 4/.style={sibling distance=19mm},
thick, scale=0.32
]

\tikzstyle{c} = [draw, thick,shape=circle]
\tikzstyle{r} = [draw, thick,shape=rectangle,minimum width=1mm, red,fill=red!10!white]
\tikzstyle{tr} = [draw, thick,isosceles triangle, shape border rotate=90, anchor=north]

\node[c]{}[edge from parent]
    child {node[c] {}
        child {node[c]{}
           child { node[c]{}
              child { edge from parent [dotted];}
              child { node[r]{\footnotesize 1}
                  edge from parent coordinate (r0);
              }
              edge from parent coordinate (l0);
           }
           child { node[c]{}
               child { edge from parent [dotted];}
               child { node[r]{\footnotesize 3}
                   edge from parent coordinate (r1);
               }
               edge from parent coordinate (r2);
           }
           edge from parent coordinate (l1);
        }
        child { node[c]{}
            child { edge from parent [dotted];}
            child { node[c]{}
                child { edge from parent [dotted];}
                child { node[r]{\footnotesize 7}
                    edge from parent coordinate (r3);    
                }
                edge from parent coordinate (r4);
            }
            edge from parent coordinate (r5);
        }
        edge from parent coordinate (l2);
    }
    child {node[c]{}
        child {node[c]{}
            child{ node[c]{}
                child { node[r]{\footnotesize 8}
                    edge from parent coordinate (l3);
                }
                child { node[r]{\footnotesize 9}
                    edge from parent coordinate (r6);
                }
                edge from parent coordinate (l4);
            }
            child { node[c]{}
                child { node[r]{\footnotesize 10}
                    edge from parent coordinate (r77);
                }
                child { node[r]{\footnotesize 11}
                    edge from parent coordinate (r7);
                }
                edge from parent coordinate (r8);
            }
            edge from parent coordinate (l5);
        }
        child {node[c]{}
            child { node[c]{}
                child { node[r]{\footnotesize 12}
                    edge from parent coordinate (l6);
                }
                child { edge from parent [dotted];}
                edge from parent coordinate (l7);
            }
            child { edge from parent [dotted];}
            edge from parent coordinate (r9);
        }
        edge from parent coordinate (r10);
    };
\node at([xshift=-3mm,yshift=2mm]l0)  {\tt \footnotesize 0};
\node at([xshift=-3mm,yshift=2mm]l1)  {\tt \footnotesize 0};
\node at([xshift=-3mm,yshift=2mm]l2)  {\tt \footnotesize 0};
\node at([xshift=-3mm,yshift=2mm]l3)  {\tt \footnotesize 0};
\node at([xshift=-3mm,yshift=2mm]l4)  {\tt \footnotesize 0};
\node at([xshift=-3mm,yshift=2mm]l5)  {\tt \footnotesize 0};
\node at([xshift=-3mm,yshift=2mm]l6)  {\tt \footnotesize 0};
\node at([xshift=-3mm,yshift=2mm]l7)  {\tt \footnotesize 0};
\node at([xshift=-3mm,yshift=2mm]r77) {\tt \footnotesize 0};
\node at([xshift=3mm,yshift=2mm]r0) {\tt \footnotesize 1};
\node at([xshift=3mm,yshift=2mm]r1) {\tt \footnotesize 1};
\node at([xshift=3mm,yshift=2mm]r2) {\tt \footnotesize 1};
\node at([xshift=3mm,yshift=2mm]r3) {\tt \footnotesize 1};
\node at([xshift=3mm,yshift=2mm]r4) {\tt \footnotesize 1};
\node at([xshift=3mm,yshift=2mm]r5) {\tt \footnotesize 1};
\node at([xshift=3mm,yshift=2mm]r6) {\tt \footnotesize 1};
\node at([xshift=3mm,yshift=2mm]r7) {\tt \footnotesize 1};
\node at([xshift=3mm,yshift=2mm]r8) {\tt \footnotesize 1};
\node at([xshift=3mm,yshift=2mm]r9) {\tt \footnotesize 1};
\node at([xshift=3mm,yshift=2mm]r10) {\tt \footnotesize 1};
\end{tikzpicture}

\bigskip

\begin{tikzpicture}[
level distance=25mm,
level 1/.style={sibling distance=75mm},
level 2/.style={sibling distance=42mm},
level 3/.style={sibling distance=25mm},
level 4/.style={sibling distance=13mm},
 thick, scale=0.32
]

\tikzstyle{c} = [draw,  thick,shape=circle]
\tikzstyle{r} = [draw,  thick,shape=rectangle,minimum width=1mm, red,fill=red!10!white]
\tikzstyle{tr} = [draw, thick,isosceles triangle, shape border rotate=90, anchor=north]

\node[c]{}[edge from parent]
    child {node[c] {}
        child {node[c]{}
            child { edge from parent [dotted];}
            child { node[c]{}
                child { node[r]{\footnotesize 2}
                    edge from parent coordinate (l4);
                }
                child { edge from parent [dotted];}
                edge from parent coordinate (r8);
            }
            edge from parent coordinate (l5);
        }
        child { node[c]{}
            child { node[c]{}
                child { edge from parent [dotted]}
                child { node[r]{\footnotesize 5}
                    edge from parent coordinate (r0);
                }
                edge from parent coordinate (l0);
            }
            child { node[c]{}
                child { edge from parent [dotted];}
                child { node[r]{\footnotesize 7}
                    edge from parent coordinate (r1);    
                }
                edge from parent coordinate(r2)
            }
            edge from parent coordinate (r3);
        }
        edge from parent coordinate (l1);
    }
    child {node[c]{}
        child { edge from parent [dotted];}
        child {node[c]{}
            child { node[c]{}
                child { node[r]{\footnotesize 12}
                    edge from parent coordinate (l2);
                }
                child { edge from parent [dotted];}
                edge from parent coordinate (l3);
            }
            child { node[c]{}
                child { edge from parent [dotted];}
                child { node[r]{\footnotesize 15}
                    edge from parent coordinate (r4);
                }
                edge from parent coordinate (r5);
            }
            edge from parent coordinate (r6);
        }
        edge from parent coordinate (r7);
    };
\node at([xshift=-3mm,yshift=2mm]l0)  {\tt \footnotesize 0};
\node at([xshift=-3mm,yshift=2mm]l1)  {\tt \footnotesize 0};
\node at([xshift=-3mm,yshift=2mm]l2)  {\tt \footnotesize 0};
\node at([xshift=-3mm,yshift=2mm]l3)  {\tt \footnotesize 0};
\node at([xshift=-3mm,yshift=2mm]l4)  {\tt \footnotesize 0};
\node at([xshift=-3mm,yshift=2mm]l5)  {\tt \footnotesize 0};
\node at([xshift=3mm,yshift=2mm]r0) {\tt \footnotesize 1};
\node at([xshift=3mm,yshift=2mm]r1) {\tt \footnotesize 1};
\node at([xshift=3mm,yshift=2mm]r2) {\tt \footnotesize 1};
\node at([xshift=3mm,yshift=2mm]r3) {\tt \footnotesize 1};
\node at([xshift=3mm,yshift=2mm]r4) {\tt \footnotesize 1};
\node at([xshift=3mm,yshift=2mm]r5) {\tt \footnotesize 1};
\node at([xshift=3mm,yshift=2mm]r6) {\tt \footnotesize 1};
\node at([xshift=3mm,yshift=2mm]r7) {\tt \footnotesize 1};
\node at([xshift=3mm,yshift=2mm]r8) {\tt \footnotesize 1};
\end{tikzpicture}

    \caption{Binary tries \bintrie{S_1} and \bintrie{S_2} encoding sets $S_1 = \{1, 3, 7, 8, 9, 10, 11, 12\}$ and $S_2 = \{2, 5, 7, 12, 15\}$.}
    \label{fig:binary tries}
\end{figure}
From this encoding, the following compression measure can be derived \cite{GHSVtcs07}. Given two bit strings $x$ and $y$ of $\ell$ bits each, let $x \ominus y$ denote the bit string obtained after removing the longest common prefix among $x$ and $y$ from $x$. For instance, for $x = \texttt{0110100}$ and $y = \texttt{0111011}$, we have $x \ominus y = \texttt{0100}$. The prefix omission method by Klein and Shapira \cite{KSdcc02} represents a sorted set $S$ as a binary sequence $\mathcal{T} = \langle x_{1}; x_{2}\ominus x_{1}; \ldots; x_{n}\ominus x_{n-1}\rangle$. If we denote $|x_{i} \ominus x_{i-1}|$ the length of bit string $x_{i} \ominus x_{i-1}$, then the whole sequence uses

$$
\trie{S} = |x_{1}| + \sum_{i=2}^{n}{|x_{i} \ominus x_{i-1}|}.
$$
It turns out that \trie{S} is the number of edges in \bintrie{S} \cite{GHSVtcs07}. Notice that \trie{S} decreases as more trie paths are shared among set elements: consider two integers $x$ and $y$, the trie represents their longest common prefix just once (then saving space), and then represents both $x\ominus y$ and $y\ominus x$.
Extreme cases are as follows: (1) All set elements form a single run of consecutive elements, which maximizes the number of trie edges shared among set elements, hence minimizing the space usage; and (2) The $n$ elements are uniformly distributed within $[0{..}u)$ (i.e., the gap between successive elements is $g_i=u/n$), which minimizes the number of trie edges shared among elements, and hence maximizes space usage. 
Notice this is similar to the case that maximizes the $\gap{S}$ measure.

\begin{definition}
We say that a node $v$ in \bintrie{S} \emph{covers} all leaves that descend from it. In such a case, we call $v$ a \emph{cover node} of the corresponding leaves. 
\end{definition}

The following lemma summarizes several results that shall be important for our work:
\begin{lemma}[\cite{GNPtcs12}] \label{lemma:bintrie-properties}
For \bintrie{S}, the following results hold: 
\begin{enumerate}
    \item Any contiguous range of $L$ leaves in \bintrie{S} is covered by $\bigoh{\lg{L}}$ nodes.
    \item Any set of $r$ nodes in \bintrie{S} has $\bigoh{r\lg{\frac{\univSize}{r}}}$ ancestors.
    \item Any set of $r$ nodes in \bintrie{S} covering a contiguous range of leaves in the trie has $\bigoh{r + \lg{\univSize}}$ ancestors. \label{item:props-bintrie-3}
    \item Any set of $r$ nodes in \bintrie{S} covering subsets of $L$ contiguous leaves has $\bigoh{\lg{r} + r\lg{\frac{L}{r}}}$ ancestors. \label{item:props-bintrie-4}
\end{enumerate}
\end{lemma}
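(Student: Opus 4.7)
The plan is to prove the four items separately with short structural arguments, expecting item~(4) to be the main obstacle. For item~(1), I would invoke the segment-tree-style canonical interval decomposition: let $v^{*}$ be the LCA in \bintrie{S} of the leftmost and rightmost of the $L$ leaves, and descend from $v^{*}$ along a ``left spine'' to the leftmost leaf and a ``right spine'' to the rightmost. Each left-descent on the left spine contributes its right-sibling subtree as a cover node, because that subtree's leaves lie strictly inside the range; symmetrically on the right spine. The leaf-counts of the chosen cover subtrees partition the $L$ leaves into a geometric-style sequence along each spine, yielding \bigoh{\log L} cover nodes in total.

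For item~(2), I would use a depth-threshold argument. Set $D=\lceil\log r\rceil$. Every ancestor at depth at most $D$ is a node of a binary trie of depth~$D$, which contains at most $2^{D+1}-1=\bigoh{r}$ nodes. Each of the $r$ given nodes contributes at most $\lceil\log u\rceil - D = \bigoh{\log(u/r)}$ strict ancestors at depth greater than~$D$ (a node already at depth $\leq D$ contributes none). Summing yields $\bigoh{r}+r\cdot\bigoh{\log(u/r)}=\bigoh{r\log(u/r)}$.

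Item~(3) sharpens the count using contiguity. Let $v^{*}$ be the LCA of the $r$ cover nodes; the root-to-$v^{*}$ path has \bigoh{\log u} ancestors. Below $v^{*}$, consider the subtree induced by the $r$ root-to-$v_{i}$ paths; its \emph{binary} internal nodes (those whose two children in \bintrie{S} each contain some $v_{i}$ below) number at most $r-1$, as in any binary tree with $r$ leaves. Every other internal node is \emph{unary}: one of its children leads to some $v_{i}$, while the other's subtree has leaves of $S$ but no cover node. Since the $v_{i}$ together cover a \emph{contiguous} leaf range, this ``missing'' subtree's leaves must lie entirely to one side of the range, so the unary node sits on one of the two boundary spines from $v^{*}$ to the leftmost or rightmost $v_{i}$. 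Each such spine has length \bigoh{\log u}, so below $v^{*}$ we obtain $\bigoh{r+\log u}$ ancestors, giving the claim.

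Item~(4) is where the argument becomes delicate and is the main obstacle: a unary node's ``missing'' subtree can now sit inside the $L$-range, since the cover nodes need not tile it, breaking the boundary argument of item~(3). My plan is to combine items~(1) and~(2). By item~(1), the $L$ contiguous leaves decompose into \bigoh{\log L} maximal ``super-cover'' subtrees $T_{1},\dots,T_{m}$; let $r_{i}$ be the number of the given nodes inside $T_{i}$, so $\sum_{i}r_{i}=r$. Inside each $T_{i}$ I would rerun the item~(2) depth-threshold argument but using a ``local universe'' of size $L_{i}=|T_{i}|$ in place of $u$, obtaining \bigoh{r_{i}\log(L_{i}/r_{i})} ancestors within $T_{i}$; concavity then gives $\sum_{i}r_{i}\log(L_{i}/r_{i})\leq r\log(L/r)$. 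Adding the \bigoh{\log L} ancestors linking the super-covers (obtained by applying item~(3) to the super-cover roots, which cover a contiguous range) and using $\log L\leq\log r+r\log(L/r)$ yields the bound \bigoh{\log r+r\log(L/r)}. The step that needs real care is the ``local universe'' substitution: the bintrie-depth inside $T_{i}$ may still be up to $\log u$, so replacing $\log u$ by $\log L_{i}$ in the item~(2) argument requires an effective-depth estimate, for instance observing that a node whose $T_{i}$-subtree holds at most $\ell$ of the $L_{i}$ leaves has only \bigoh{\log \ell} ``informative'' ancestors inside $T_{i}$ (an inward reuse of item~(3)).
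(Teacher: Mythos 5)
The paper never proves this lemma---it is imported from \cite{GNPtcs12}---so your argument can only be judged on its own merits. Items (1)--(3) are correct and are the standard arguments. For (1), the decisive point, which you compress into ``geometric-style sequence,'' is that along each spine the chosen cover subtrees sit at pairwise distinct depths, hence correspond to disjoint dyadic universe intervals of pairwise distinct power-of-two lengths contained in the length-$L$ range; a set of distinct powers of two summing to at most $L$ has at most $\lg L + 1$ members, so you should say this explicitly rather than let the reader fear the bound is the spine length $\bigoh{\lg u}$. Your depth-threshold count for (2) and your branching-versus-unary accounting for (3) (with the observation that a unary Steiner node's uncovered child subtree would break contiguity unless the node lies on one of the two boundary spines) are sound, modulo the harmless convention that $\lg x$ is read as $\max\{1,\lg x\}$ so that the additive $\bigoh{r}$ of item (2) is absorbed, and the implicit assumption that the $r$ nodes form an antichain.

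Item (4) does contain a gap, but not where you expected it. The ``local universe'' worry is a non-issue: a super-cover node produced by item (1) at depth $d$ roots a subtree of height exactly $\ell-d=\lg L_i$, where $L_i=2^{\ell-d}\le L$ is the length of its dyadic interval, so the item-(2) argument applies verbatim inside $T_i$ with universe size $L_i$, and your Jensen step then gives $\sum_i r_i\lg(L_i/r_i)\le r\lg(L/r)$. The real problem is the cost of linking the super-covers: applying item (3) to the $m=\bigoh{\lg L}$ super-cover roots yields $\bigoh{m+\lg u}=\bigoh{\lg u}$ ancestors, not $\bigoh{\lg L}$, because it includes the path from the root of \bintrie{S} down to the LCA of the range, of length up to $\lg u-\lg L$; this term cannot be absorbed into $\lg r+r\lg(L/r)$. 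What your argument actually establishes is $\bigoh{\lg u+r\lg(L/r)}$, which is the form of the bound in \cite{GNPtcs12}; the $\lg r$ printed in item (4) here appears to be a slip for $\lg u$. Indeed, as literally stated item (4) is false: a single node ($r=1$) at depth $\ell-1$ covering a subset of $L=2$ contiguous leaves has $\lg u-1$ ancestors, while the stated bound evaluates to $\bigoh{1}$. So either prove the corrected statement (which your argument, minus the faulty absorption step, already does) or justify why the root-to-LCA path can be discounted---it cannot.
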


\begin{definition}
Given a set $S=\{x_1, \ldots, x_n\} \subseteq [0{..}u)$, let $S+a$, for $a \in [0{..}u)$, denote a shifted version of $S$: $S+a = \{(x_1+a)\bmod{u}, (x_2+a)\bmod{u},\ldots, (x_n+a)\bmod u \}$.
\end{definition}
The following result is relevant for our proposal:
\begin{lemma}[\cite{GHSVtcs07}] \label{lemma:trie}
Given a set $S\subseteq [0{..}u)$ of $n$ elements, it holds that:
\begin{enumerate}
\item $\trie{S} \le \min{\{2\gap{S},n\lg{(u/n)+2n}\}}$. \label{item:trie-1}
\item $\exists a \in [0{..}u)$, such that $\trie{S+a} \le \gap{S} + 2n -2$. \label{item:trie-2}
\item For any $a \in [0{..}u)$ chosen uniformly at random, $\trie{S+a} \le \gap{S} + 2n -2$ on average. \label{item:trie-3}
\end{enumerate}
\end{lemma}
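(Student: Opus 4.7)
The three items can be proved in an order that reflects their logical dependence: item 1 by direct trie-size arguments, item 3 by a probabilistic analysis of random shifts, and item 2 as an immediate consequence of item 3.

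For item 1, the $n\lg(u/n)+2n$ bound would be obtained by counting nodes of $\bintrie{S}$ depth-by-depth: at depth $d$ there are at most $\min(2^d, n)$ distinct length-$d$ prefixes of the $n$ elements, so summing and splitting at $d^\star = \lfloor \lg n\rfloor$ gives a geometric part $\le 2n$ plus a linear part $\le n\lg(u/n) + n$, and $\trie{S}$ is one less than the total node count. For the $2\gap{S}$ bound, I would use the telescoping expression $\trie{S} = |x_1| + \sum_{i\ge 2}|x_i \ominus x_{i-1}|$ and amortize each $|x_i \ominus x_{i-1}|$ against $\lfloor \lg g_i\rfloor + 1$: whenever a small $g_i$ nevertheless forces a long $|x_i \ominus x_{i-1}|$ because the two elements straddle a high-order dyadic boundary, the excess is absorbed by charging the neighbouring gap, and the factor of $2$ reflects the worst case of this amortization.

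For item 3, I would analyze the expected contribution of each circularly-consecutive pair $(x_{i-1}, x_i)$ (with difference $d_i = g_i + 1$) to $\trie{S+a}$ when $a$ is uniform on $[0{..}u)$. Writing $y_j = (x_j+a)\bmod u$, the elements $y_{i-1}$ and $y_i$ share their top $\ell - k$ bits iff they lie in the same dyadic block of size $2^k$, which happens iff $y_{i-1} \bmod 2^k < 2^k - d_i$. A uniform shift distributes $y_{i-1}\bmod 2^k$ uniformly, so $\Pr[|y_i \ominus y_{i-1}| > k] = \min(1, d_i/2^k)$. Summing $\sum_{k\ge 0}\Pr[\cdot>k]$ splits into $\lfloor \lg d_i\rfloor + 1$ saturated terms plus a geometric tail of value at most $2$, giving a per-pair expected contribution of at most $\lfloor \lg d_i\rfloor + 3 \le (\lfloor \lg g_i\rfloor + 1) + 2$. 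Summing over the $n$ pairs, matching each $\lfloor \lg g_i\rfloor + 1$ with the corresponding summand of $\gap{S}$ and absorbing the per-pair $+2$ into the $2n$ additive term, yields $E_a[\trie{S+a}] \le \gap{S} + 2n$; the improvement to $\gap{S} + 2n - 2$ then comes from the strict inequality in the geometric tail at the endpoint pairs together with the expected value of $|y_1|$ after a uniform shift. Item 2 follows by the probabilistic method: since $E_a[\trie{S+a}] \le \gap{S} + 2n - 2$, at least one specific $a \in [0{..}u)$ achieves this bound.

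The main obstacle I anticipate is pinning down the precise additive $-2$ in items 2--3: the per-pair tail is only strictly less than $2$, so a naive sum already gives $\gap{S} + 2n$, and extracting the $-2$ requires a careful accounting of the wrap-around gap (which in the linearly sorted $S+a$ replaces one of the $n$ inner gaps by the external circular gap $u - x_n + x_1$) together with the expected length of the leftmost leaf $|y_1|$. A secondary subtlety is the $2\gap{S}$ amortization in item 1, which must charge boundary-straddle cost to neighbouring gaps without double-counting.
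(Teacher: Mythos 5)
First, note that the paper does not prove this lemma at all: it is imported verbatim from Gupta et al.~\cite{GHSVtcs07}, so there is no in-paper argument to compare yours against. On the parts of your sketch that can be checked, your route is the standard one and is essentially sound: the $n\lg(u/n)+2n$ bound by summing $\min(2^d,n)$ over depths is correct; the random-shift analysis via $\Pr[|y_i\ominus y_{i-1}|>k]=\min(1,d_i/2^k)$, summing the tail, and invoking linearity of expectation is exactly the right mechanism for item~3; and deducing item~2 from item~3 by the probabilistic method is fine. You are also right that the additive constant is the delicate part: as written, your per-pair bound $\lfloor\lg d_i\rfloor+3\le(\lfloor\lg g_i\rfloor+1)+2$ fails when $d_i=g_i+1$ is a power of two (e.g.\ $g_i=1$, $d_i=2$), and you have not actually carried out the wrap-around/endpoint accounting that turns $\gap{S}+3n$-ish into $\gap{S}+2n-2$; that part remains a sketch rather than a proof.

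The genuine gap is the $2\gap{S}$ half of item~1. ``Charge the excess to the neighbouring gap'' cannot be made to work, because every gap in the set can simultaneously be small while its pair straddles a high dyadic boundary, leaving nothing to charge to. Concretely, take $u=1024$, $\ell=10$, $S=\{1,510,512\}$: then $\trie{S}=|x_1|+|x_2\ominus x_1|+|x_3\ominus x_2|=10+9+10=29$, while $\gap{S}=(\lfloor\lg 1\rfloor+1)+(\lfloor\lg 508\rfloor+1)+(\lfloor\lg 1\rfloor+1)=1+9+1=11$, so $\trie{S}=29>22=2\gap{S}$. With the definitions of $\gap{\cdot}$ and $\trie{\cdot}$ as given in this paper, the inequality $\trie{S}\le 2\gap{S}$ is therefore false, and no amortization scheme can rescue it; the statement only becomes true under a different convention (e.g.\ including the trailing gap to $u$, or after the shift of items 2--3). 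So your inability to pin down the factor-of-$2$ charging is not a presentational issue but a sign that this sub-claim, as stated, does not admit a proof; the remaining three sub-claims are where your outline is on the right track.
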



\subsection{Set Intersection Algorithms} 

We review set intersection algorithms needed for our work.

\subsubsection{Barbay and Kenyon Algorithm}

Barbay and Kenyon intersection algorithm \cite{BKtalg08} is an \emph{adaptive approach} that works on the (more general) comparison model. An adaptive algorithm is one whose running time is a function not only of the instance size (as usual), but also of a difficulty measure of the instance. In this way, ``easy'' instances are solved faster than ``difficult'' ones, allowing for a more refined analysis than typical worst-case approaches. 
Given a query $\mathcal{Q} = \{i_1, \ldots, i_k\} \subseteq [1{..}N]$, the algorithm assumes the involved sets are represented using sorted arrays storing the $\lceil\lg{u}\rceil$-bit binary codes of the set elements. Algorithm \ref{alg:BarbayKenyon} shows the pseudocode.
    \begin{algorithm}[ht]
\DontPrintSemicolon
\KwResult{The set intersection $I = S_1\cap \cdots \cap S_k$}
    \Begin{
                $I \gets \emptyset$\\
                $x \gets S_1[1]$\\ 
                $i \gets 2$\\
                $occ \gets 1$ \\
                \While{$x \neq \infty$}{
                    $y \gets \successor(S_i, x)$\\
                    \If {$x = y$}{
                        $occ \gets occ + 1$
                    }
                    \If{$occ = k~\vee~x \neq y $}{
                        \If{$occ = k$}{
                            $I \gets I \cup \{x\}$
                        }
                        $x \gets y$ \\ 
                        $occ \gets 1$
                    } 
                    $i \gets (i+1)\bmod k$
                }
                \Return $I$
        }
        \caption{$\texttt{BK-Intersection}(\textrm{sets}~S_1, \ldots, S_k)$} 
        \label{alg:BarbayKenyon}
    \end{algorithm}
The algorithm keeps an element $x$ which is used to carry out searches in all sets. Initially, $x\gets S_1[1]$ (i.e., the first element of set $S_1$). At each step, the algorithm looks for the successor of $x$ in the next set (say, $S_2$), using doubling (or exponential) search \cite{BYipl76}. If $x \in S_2$, we repeat the process with $S_3$. Otherwise, $x \gets \successor(S_2, x)$, and continue with $S_3$. It is also important to keep a finger $f_i$ with every set $S_i$ in the query. The idea is that $f_i$ indicates the point where the previous search finished in $S_i$. So, the next search will start from $f_i$.


Any algorithm that computes $\mathcal{I(Q)}$ must show a certificate \cite{BKtalg08} or proof \cite{DLMsoda00} to show that the intersection is correct. That is, that any element in $\mathcal{I(Q)}$ belongs to the $k$ sets $S_{i_1}, \ldots, S_{i_k}$, and that no element in the intersection has been left out of the result. Barbay and Kenyon \cite{BKtalg08} introduced the notion of \emph{partition certificate} of a query $\mathcal{Q}$, which can be defined as follows.
\begin{definition}
\label{def:certificate}
Given a query $\mathcal{Q} = \{i_1, \ldots, i_k\} \subseteq [1{..}N]$,
its \emph{partition certificate} is a partition of the universe
$[0{..}\univSize)$ into a set of intervals $\mathcal{P} = \{I_1, I_2, \ldots, I_{|\mathcal{P}|}\}$, 
such that:
\begin{enumerate}
\item $\forall x \in \mathcal{I(Q)}, [x{..}x] \in \mathcal{P}$; 
\label{item:1-BK}
\item $\forall x \not \in \mathcal{I(Q)}, \exists I_j \in \mathcal{P}, x \in I_j~\wedge~\exists q \in \mathcal{Q}, S_q \cap I_j = \emptyset$.
\label{item:2-BK}
\end{enumerate}
\end{definition}
For a given query $\mathcal{Q}$, several valid partition certificates could be given. However, we are interested in the smallest partition certificate of $\mathcal{Q}$, as it takes the least time to be computed.
\begin{definition}
For a given query $\mathcal{Q} = \{i_1, \ldots, i_k\} \subseteq [1{..}N]$, let $\delta$ denote the size of the smallest partition certificate of $\mathcal{Q}$.
\end{definition}
This value $\delta$ is known as the \emph{alternation measure} \cite{BKtalg08}, and it (somehow) measures the difficulty of a given instance of the intersection problem. Notice $|\mathcal{I(Q)}| \le \delta$ holds.
Figure \ref{fig:partition-certificate} shows the smallest partition certificate (of size $\delta=8$) for sets $S_1$ and $S_2$ of our running example.
\begin{figure}
\begin{tabular}{ccc|c|cc|cc|c|cccc|c|ccc}
$S_{1}:$ & & 1 &   & 3 &  &   &  & 7 & 8 & 9 & 10 & 11 & 12 &  \\
$S_{2}:$ & &  & 2 &   &  & 5 &  & 7 &   &   &    &    & 12 &  & & 15 \\
\end{tabular}
\caption{Vertical lines show the smallest partition certificate
$\mathcal{P} = \{[0{..}1], [2{..}2], [3{..}4], [5{..}6], [7{..}7], [8{..}11], [12{..}12], [13{..}15]\}$ of size $\delta = 8$ of the universe $[0{..}16)$  
for the intersection of sets
$S_1 = \{1, 3, 7, 8, 9, 10, 11, 12\}$ and $S_2 = \{2, 5, 7, 12, 15\}$.
}
\label{fig:partition-certificate}
\end{figure}
Barbay and Kenyon \cite{BKsoda02,BKtalg08} proved a lower bound of $\Omega(\delta \sum_{i\in\mathcal{Q}}{\lg{\!(n_{i}/\delta)}})$ comparisons for the set intersection problem. They also proved that $\texttt{BK-Intersection}$ runs in $\bigoh{\delta\sum_{i\in\mathcal{Q}}{\lg{\!(n_i/\delta)}}}$ time, which is optimal.

\subsubsection{Trabb-Pardo Algorithm}
In his thesis, Trabb-Pardo \cite{TrabbPardo} studied integer-set representations and their corresponding intersection algorithms, such as the divide-and-conquer approach shown in Algorithm \ref{alg:set-intersection}.
\begin{algorithm}[ht]
\DontPrintSemicolon
\KwResult{The set intersection $S_1\cap \cdots \cap S_k$}
\Begin{
    \DontPrintSemicolon\Comment*[l]{Base cases}
      \For{$i\gets 1~\mathrm{to}~k$}{
          \If{$S_i = \varnothing$}{ 
              \Return{$\varnothing$}\;
          }
      }
    \eIf{$L = R$}{
        \Return{$\{L\}$} \DontPrintSemicolon\Comment*[r]{Universe of size 1, all sets are the same singleton}
    }{
      \DontPrintSemicolon\Comment*[l]{Divide}
      $M \gets (R+L)/2$\\
      \For{$i\gets 1~\mathrm{to}~k$}{
          $S_{i,l} \gets \{x \in S_i~|~ x\in [L{..}M)\}$ \label{line:divide-left}\\
          $S_{i,r} \gets \{x \in S_i~|~ x\in [M{..}R)\}$ \label{line:divide-right}\\
      }
      \DontPrintSemicolon\Comment*[l]{Conquer}
      $R_1 \gets \texttt{TP-Intersection}(S_{1,l}, \ldots, S_{k, l}, [L{..}M))$\\
      $R_2 \gets \texttt{TP-Intersection}(S_{1,r}, \ldots, S_{k, r}, [M{..}R))$\\
      \DontPrintSemicolon\Comment*[l]{Combine}
      \Return{$R_1 \cup R_2$} \DontPrintSemicolon\Comment*[r]{Disjoint set union}
    }
} 
 \caption{$\texttt{TP-Intersection}(\mathrm{sets}~S_1,\ldots,S_k;~\mathrm{universe}~[L{..}R))$}
\label{alg:set-intersection}
\end{algorithm}
Given a query $\mathcal{Q} = \{i_1, \ldots, i_k\}\subseteq [1{..}N]$, $\mathcal{I(Q)}$ is computed invoking $\texttt{TP-Intersection}(S_{i_1}, \ldots, S_{i_k}, [0{..}u))$. 
The main idea is to divide the universe into two halves, to then divide each set into two according to this universe division. This differs from, e.g., Baeza-Yates's algorithm \cite{BYcpm04,BYSspire05}, which divides according to the median of one of the sets.
The \emph{Divide} steps (lines \ref{line:divide-left} and \ref{line:divide-right}) can be implemented implicitly by binary searching for the successor of $(R+L)/2$, provided the sorted sets are represented using plain arrays. 
At the first level of recursion, notice that the most significant bit of every element in sets $S_{i,l}$ is \texttt{0} (i.e., belong to the left half of the universe), whereas for $S_{i,r}$ that bit is a \texttt{1} (i.e., they belong to the right half of the universe). At each node of the recursion tree, the current universe is divided into 2 halves, and then we recurse on the sets divided accordingly. 
The universe partition carried out by Trabb-Pardo's algorithm avoids several of the problems faced by Baeza-Yates' algorithm and allows us for a more efficient implementation with time guarantees, as we shall see.  

Sets are known in advance, so the Divide step of Algorithm \ref{alg:set-intersection} can be implemented efficiently by precomputing the set divisions carried out recursively. This is because set divisions are carried out according to the universe, which is query independent. 
Algorithms that divide according to the median element \cite{BYcpm04,BYSspire05}, on the other hand, have query dependent divisions, hence hard to be precomputed.  Trabb-Pardo proposed to store the precomputed divisions using a binary trie representations of the sets (see Section \ref{sec:trie-measure}). We store \bintrie{S_i} for each $S_i\in \mathcal{S}$. Notice how \bintrie{S_i} mimics the way set $S_i$ is recursively divided by Algorithm \ref{alg:set-intersection}. The left child of the root represents all elements whose most significant bit is \texttt{0}, that is, the elements in $S_{i,l}$ (see line \ref{line:divide-left}) of Algorithm \ref{alg:set-intersection}; similarly for $S_{i,r}$, which contains all elements in $S_i$ whose most-significant bit is  \texttt{1}.


Algorithm \ref{alg:set-intersection} is implemented on binary tries as follows. 
In particular, $\texttt{TP-Intersection}$ induces a DFS traversal in synchronization on all tries involved in the query, following the same path in all of them and stopping (and backtracking if needed) as soon as we reach a dead end in one of the tries (which correspond to dotted lines in Figure \ref{fig:binary tries}), or we reach a leaf node in all the tries (in whose case we have found an element belonging to the intersection). In this way, (1) we stop as soon as we detect a universe interval that does not have any element in the intersection, and (2) we find the relevant elements when we arrive at a leaf node.  

\subsection{Compressed Set Representations}

We review next some 
compressed set representations.

\subsubsection{Integer Compression}
A rather natural way to represent a set $S$ is to use \emph{universal codes} \cite{MG} for integers to represent either the sequence of gaps $\mathcal{G}$, the run lengths $\mathcal{Z}$ and $\mathcal{O}$, or the POM sequence $\mathcal{T}$.
Elias $\gamma$ and $\delta$ \cite{Elias1975} are among the most know universal codes, as well as Fibonacci codes \cite{FKdam96}. 
In particular, Elias $\delta$ encodes an integer $x$ using $\lfloor \lg{x}\rfloor + 2 \lfloor\lg{(\lfloor\lg{x} \rfloor+1)} \rfloor$ bits. 
Hence, sequence $\mathcal{G}$ can be encoded using $\sum_{i=1}^{n}{\lfloor \lg{x}\rfloor + 2 \lfloor\lg{(\lfloor\lg{x} \rfloor+1)} \rfloor} = \gap{S}(1+o(1))$ bits. Similarly, one can achieve space close to $\rle{S}$ or $\trie{S}$ by using Elias $\delta$ of sequences $\mathcal{Z}$ and $\mathcal{O}$ or $\mathcal{T}$, respectively.  

As we encode the gap sequence of the set, direct access to the set elements in no longer possible. To obtain element $x_i \in S$ we must decompress $g_1, \ldots, g_i$ to compute $x_i = \sum_{j=1}^{i}{g_j}$, which takes $\bigoh{i}$ time. To reduce this cost, the sequence of codes is divided into $\lceil n/B\rceil$ blocks of $B$ elements each. We store an array $A[1{..}\lceil n/B \rceil]$ such that $A[i] = \sum_{j=1}^{i\lceil n/B \rceil}{g_j}$, adding $\lceil n/B \rceil\lg{u}$ extra bits. Now, decoding $x_i$ takes $\bigoh{B}$ time, as only the block containing $x_i$ must be accessed. This representation is typical in inverted index compression in IR \cite{BCC2010,MG}. Intersections can be computed using \texttt{BK-Intersection}, using array $A$ to skip unwanted blocks. 
The running time is increased by up to $\bigoh{\delta B}$. Besides, decoding a single integer encoded with Elias $\delta$ or Fibonacci is slow in practice ---15--30 nsecs per decoded integer is usual \cite{BCC2010}. Hence, alternative approaches are used in practice, such as VByte \cite{WZcj99}, Simple9 \cite{AMir05} (and optimized variants like Simple16 \cite{ZLSwww08} and Simple18 \cite{AOGSipm18}), and PForDelta \cite{ZHNBicde06} (and optimized variants like OptPFD \cite{YDSwww09}). These have efficient decoding time ---e.g., less than 1 nsec/int on average is typical \cite{BCC2010}---, yet their space usage is not guaranteed to achieve any compression measure, although they yield efficient space usage in practice.

Alternatively, Elias-Fano \cite{Ejacm74,fano1971number,OSalenex07,Vwsdm13} compression represents set elements directly, rather than their gaps, using  $n\lg{(u/n)} + 2n + o(n)$ bits of space \cite{OSalenex07} (which is worst-case-optimal \footnote{Actually, this is almost worst-case optimal, as this is $\Theta(n)$ bits above the worst-case lower bound defined before in this paper.}) and supporting efficient access to set elements. In general, this representation uses more space than Elias $\gamma$ and $\delta$ over the gap sequences (recall Lemma \ref{lemma:gap-vs-wc-lower-bound}), however, e.g., Okanohara and Sadakane \cite{OSalenex07} support access to a set element $x_i$ in $\bigoh{1}$ time. Hence, algorithm $\texttt{BK-Intersection}$ can be implemented in $\bigoh{\delta\sum_{i=1}^{k}{(\lg{\frac{u}{n_i}} + \lg{\frac{n_i}{\delta}})}} = \bigoh{\delta\sum_{i=1}^{k}{\lg{\frac{u}{\delta}}}}$. Recent variants of Elias-Fano are Partitioned Elias-Fano (PEF) by Ottaviano and Venturini \cite{OVsigir14} and Clustered Elias-Fano by Pibiri and Venturini \cite{PVtois17}. These are highly competitive approaches in practice, taking advantage of the non-uniform distribution of set elements along the universe.

The approach of Ding and Konig is also worth to be mentioned \cite{DKpvldb11}. They introduce a linear-space data structure that leverages the word RAM model of computation to compute intersections in time $\bigoh{n/\sqrt{w} + k|\mathcal{I(Q)}|}$, where $n = \sum_{i\in\mathcal{Q}}{n_i}$. According to their experiments, this seems to be a practical approach. Another practical studies of set intersection algorithms are by Tsirogiannis et al.~\cite{TGKpvldb09} and Kim et al.~\cite{KLHEpvldb18}.

\subsubsection{Compressed $\rank/\select$ Data Structures}

Compressed data structures supporting $\rank$ and $\select$ operations can be used to support successor (i.e., $\successor (S, x) \equiv \select(S, \rank(S, x-1) + 1)$) and $S[i] \equiv \select(S, i)$, needed to implement $\texttt{BK-Intersection}$. We survey next the most efficient approaches on these lines.

Operations $\rank$ and $\select$ can be supported in $\bigoh{c}$ time using a combination of Raman, Raman, and Rao succinct representation \cite{RRR07} and \patrascu's succincter \cite{Patrascu08}, using $\lowerB + \bigoh{u/\lg^c{(u/c)}} + \bigoh{u^{3/4}\textrm{poly}\log{(u)}}$ bits, for any $c > 0$. Algorithm $\texttt{BK-Intersection}$ on this representation takes $\bigoh{ck\delta}$ time. However, the big-oh term in the space usage depends on the universe size, which would introduce an excessive space usage if the universe is big.
The approaches of M{\"{a}}kinen and Navarro \cite{MNtcs07} and Sadakane and Grossi \cite{SGsoda06} can be used to reduce the term $\lowerB$ to $\gap{S}$, while retaining $\bigoh{c}$ time $\rank$ and $\select$. Arroyuelo and Raman \cite{ARalg22} were able to squeeze the space usage further, to achieve better entropy bounds. They support $\rank$ and $\select$ in $\bigoh{1}$ time while using $n\hzgap{S} + \bigoh{u(\lg\lg{u})^2/\lg{u}}$ bits, or alternatively $r\hzrun{S} + \bigoh{u(\lg\lg{u})^2/\lg{u}}$ bits, where $\hzgap{S}$ and $\hzrun{S}$ are the zero-order entropies of the gap and run-length distributions, respectively, and $r$ is the number of runs of successive elements in $S$. Algorithm $\texttt{BK-Intersection}$ takes $\bigoh{k\delta}$ time on this representation. However, again the $o(u)$ term still dominates for big universes.
    
To avoid the $o(u)$-bit dependence, Gupta et al.~\cite{GHSVtcs07} data structure uses $\gap{S} (1+o(1))$ bits,  supporting $\rank$ in $\bigoh{\PT(u,n,a) + \lg{\lg{n}}}$ time and $\select$ in $\bigoh{\lg{\lg{n}}}$ time, where $\PT(u,n,a)$ is \patrascu and Thorup \cite{PT} optimal bound for a universe of size $u$, $n$ elements, and $a = \lg{(\lg{u}/\lg^2{n})}$ \cite{ARalg22}. 
Hence, \texttt{BK-Intersection} takes  $\bigoh{\delta\sum_{i=1}^{k}{(\PT(u, n_i, a_i) + \lg{\lg{n_i}})}}$ time on this representation, where $a_i = \lg{(\lg{u}/\lg^2{n_i})}$. Alternatively, one can use $\rle{S} + o(\rle{S})$ bits of space, with $\bigoh{\delta\sum_{i=1}^{k}{(\PT(\univSize, r_i, a_i) + \linebreak[4] (\lg{\lg{r_i}})^2)}}$ intersection time \cite{ARalg22}, for $a_i=\lg{(\lg{(\univSize-n_i)}/\lg^2{r_i})}$, where $r_i$ is the number of runs of successive elements in set $S_i$, $i \in \mathcal{Q}$.
Finally, the data structures by Arroyuelo and Raman \cite{ARalg22} use  $(1+1/t) n \hzgap{S} + \bigoh{n / \lg u}$ bits or $(1+1/t) r \hzrun{S} + \bigoh{n / \lg u}$  bits of space, for $t > 0$. \texttt{BK-Intersection} on them takes $\bigoh{\delta\sum_{i=1}^{k}{(t + \PT(u,n_i,\bigoh{1}))}}$ time. Although they avoid the dependence on $u$, their space still depends on $n$. This would dominate when the $n$ elements are strongly grouped into runs, blowing up the space usage.

Finally, it is worth noting that wavelet trees's \cite{GGVsoda03} intersection algorithm by Gagie et al.~\cite{GNPtcs12} actually carries out Trabb-Pardo's approach, adapted to work on this particular data structure. They show that their algorithm works in $\bigoh{k\delta\lg(\!u/\delta)}$ time, yet their space usage to represent a set would be $n\lg{u}$ bits, for a set of $n$ elements.

\section{Trie Certificates}

Next, we analyze the running time of $\texttt{TP-Intersection}$ when implemented using binary tries. Trabb-Pardo carried out just an average-case analysis of his algorithm \cite{TrabbPardo}. We introduce the concept of \emph{trie certificate} to carry out an adaptive-case analysis. 

We show that the recursion tree (which we denote $\certificate{\mathcal{Q}}$) of the synchronized DFS traversal carried out by $\texttt{TP-Intersection}$ can act as a certificate (or proof) for the intersection. Figure \ref{fig:intersection-certificate} shows the binary trie \certificate{\mathcal{Q}} for $S_1 \cap S_2$, for sets $S_1$ and $S_2$ from Figure \ref{fig:binary tries}.
\begin{figure}[ht]
    \centering
\begin{tikzpicture}[
level distance=25mm,
level 1/.style={sibling distance=144mm},
level 2/.style={sibling distance=73mm},
level 3/.style={sibling distance=35mm},
level 4/.style={sibling distance=19mm},
thick, scale=0.32
]

\tikzstyle{c} = [draw,  thick,shape=circle]
\tikzstyle{r} = [draw,  thick,shape=rectangle,minimum width=1mm, red,fill=red!10!white]
\tikzstyle{tr} = [draw, thick,isosceles triangle, shape border rotate=90, anchor=north]

\node[c]{}[edge from parent]
    child {node[c] {}
        child {node[c]{}
            child { edge from parent [dotted];}
            child { node[c]{}
                child {edge from parent [dotted];}
                child {edge from parent [dotted];}
                edge from parent coordinate (r0);
            }
            edge from parent coordinate (l0);
        }
        child { node[c]{}
            child { edge from parent [dotted];}
            child { node[c]{}
                child { edge from parent [dotted];}
                child { node[r]{\footnotesize 7}
                    edge from parent coordinate (r1);    
                }
                edge from parent coordinate(r2)
            }
            edge from parent coordinate (r3);
        }
        edge from parent coordinate (l1);
    }
    child {node[c]{}
        child { edge from parent [dotted];}
        child {node[c]{}
            child { node[c]{}
                child { node[r]{\footnotesize 12}
                    edge from parent coordinate (l2);
                }
                child { edge from parent [dotted];}
                edge from parent coordinate (l3);
            }
            child { edge from parent [dotted];}
            edge from parent coordinate (r4);
        }
        edge from parent coordinate (r5);
    };
\node at([xshift=-3mm,yshift=2mm]l0)  {\tt \footnotesize 0};
\node at([xshift=-3mm,yshift=2mm]l1)  {\tt \footnotesize 0};
\node at([xshift=-3mm,yshift=2mm]l2)  {\tt \footnotesize 0};
\node at([xshift=-3mm,yshift=2mm]l3)  {\tt \footnotesize 0};
\node at([xshift=3mm,yshift=2mm]r0) {\tt \footnotesize 1};
\node at([xshift=3mm,yshift=2mm]r1) {\tt \footnotesize 1};
\node at([xshift=3mm,yshift=2mm]r2) {\tt \footnotesize 1};
\node at([xshift=3mm,yshift=2mm]r3) {\tt \footnotesize 1};
\node at([xshift=3mm,yshift=2mm]r4) {\tt \footnotesize 1};
\node at([xshift=3mm,yshift=2mm]r5) {\tt \footnotesize 1};
\end{tikzpicture}

    \caption{Trie certificate for the intersection $\{1, 3, 7, 8, 9, 10, 11, 12\} \cap  \{2, 5, 7, 12, 15\}$. This trie shows the nodes that must be checked to determine that the result is, in this case, $\{7,12\}$.}
    \label{fig:intersection-certificate}
\end{figure}
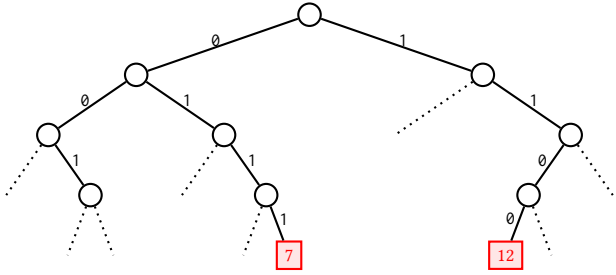
Notice that:
(1) every internal node in $\certificate{\mathcal{Q}}$ is a comparison that cannot discard elements, so we must keep going down; (2) every external node at depth $d \le \ell$ (which descend from an internal node using a dashed edge in Figure \ref{fig:intersection-certificate}) is a comparison that allows us discard the corresponding universe interval of $2^{\ell-d}$ elements. For instance, the leftmost dashed edge in Figure \ref{fig:intersection-certificate} corresponds to interval $[0{..}1]$, whereas the rightmost corresponds to $[14{..}15]$. More formally, assume a binary string $b$ of length $d-1 \le \ell$ such that $b$ belongs to all $\bintrie{S_{i_1}}, \ldots, \bintrie{S_{i_k}}$, and  $b\cdot\bit{0}$ (or, alternatively, $b\cdot\bit{1}$) belongs to all $\bintrie{S_i}$, $i\in\mathcal{Q}$, except for at least one of the  tries, say $\bintrie{S_j}$. That means that the universe interval $[b\cdot\bit{0}^{\ell-d}{..}b\cdot\bit{1}^{\ell-d}]$ (with endpoints represented in binary) has no elements in the intersection, and hence we can safely stop at path $b\cdot\bit{0}$; and (3) every external node at depth $\ell$ is an element belonging to the intersection.
In this way, the recursion tree is a certificate for the intersection. 

Overall, notice that $\certificate{\mathcal{Q}}$ covers the universe $[0{..}u)$ with intervals, indicating which universe elements belong to $\mathcal{I(Q)}$, and which ones do not. This is similar to the partition certificate delivered by Barbay and Kenyon's algorithm. For instance, the trie certificate of Figure \ref{fig:intersection-certificate} partitions the universe $[0{..}16)$ into $\{[0{..}1], [2{..}2], [3{..}3],\linebreak[4] [4{..}5], [6{..}6], [7{..}7], [8{..}11], [12{..}12], [13{..}13], [14{..}15]\}$.

\begin{definition}
\label{def:trie-certificate}
Given a query $\mathcal{Q} = \{i_1, \ldots, i_k\} \subseteq [1{..}N]$,
its \emph{trie partition certificate} is a partition of the universe
$[0{..}\univSize)$ into a set of intervals $\mathcal{P} = \{I_1, I_2, \ldots, I_{|\mathcal{P}|}\}$, 
induced by the trie $\certificate{Q}$.
\end{definition}

Let $|\certificate{\mathcal{Q}}|$ denote the number of nodes of $\certificate{\mathcal{Q}}$. Since $\certificate{\mathcal{Q}}$ is the recursion tree of algorithm \texttt{TP-intersection}, the time spent by the algorithm is $\bigoh{k|\certificate{\mathcal{Q}}|}$. 
To bound $k\certificate{\mathcal{Q}}$ in the worst-case , notice that at most we must traverse all tries $\bintrie{S_i}$, $i\in\mathcal{Q}$, so we have: 
\begin{align*} 
k|\certificate{\mathcal{Q}}| & \le \sum_{i\in \mathcal{Q}}{\trie{S_i}}
        \le \sum_{i \in \mathcal{Q}}{n_i\lg{\frac{u}{n_i}}}.
\end{align*}

Next we bound $k|\certificate{\mathcal{Q}}|$ in an adaptive way, to show that Algorithm \ref{alg:set-intersection} is actually an adaptive approach.
\begin{theorem} \label{theor:TP-adaptive}
Given a query instance $\mathcal{Q} = \{i_1,\ldots, i_k\}\subset [1{..}N]$ with alternation measure $\delta$, $\texttt{TP-Intersection}$ computes $\mathcal{I(Q)}$ in time $\bigoh{k\delta\lg{\!(u/\delta)}}$.
\end{theorem}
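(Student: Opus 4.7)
The plan is to bound $|\certificate{\mathcal{Q}}|$ by $\bigoh{\delta\lg(u/\delta)}$; combined with the already-observed fact that $\texttt{TP-Intersection}$ spends $\bigoh{k}$ time per node of its recursion tree, this immediately yields the claimed $\bigoh{k\delta\lg(u/\delta)}$ running time.

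A first useful observation is that $\certificate{\mathcal{Q}}$ is a full binary tree: at every node the algorithm recurses on both halves of the current universe interval, and each child call either becomes a dead-end leaf or an internal node with two children of its own. Hence it suffices to bound the number of internal nodes of $\certificate{\mathcal{Q}}$, since the number of leaves exceeds that by exactly one.

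The core step is to charge these internal nodes against a minimum Barbay--Kenyon partition certificate $\mathcal{P}=\{I_1,\ldots,I_\delta\}$ of $\mathcal{Q}$. The claim I would prove is that, for every internal node $v\in\certificate{\mathcal{Q}}$, the dyadic universe interval $D(v)$ associated with $v$ cannot be contained in any single $I_j$; hence $D(v)$ strictly contains at least one of the $\delta-1$ boundaries separating consecutive BK intervals. Indeed, if $D(v)\subseteq I_j$, then either $I_j=\{x\}$ is a singleton, forcing $|D(v)|=1$ and making $v$ a leaf of the complete trie (so not an internal node); or $I_j$ is witnessed by some $S_q\in\mathcal{Q}$ with $S_q\cap I_j=\emptyset$, whence $S_q\cap D(v)=\emptyset$ as well, and then Algorithm \ref{alg:set-intersection} detects the empty set at $v$ and returns, making $v$ a leaf of $\certificate{\mathcal{Q}}$ rather than an internal node.

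Finally, I would turn this into a counting bound. Let $w_1,\ldots,w_{\delta-1}$ be the lowest common ancestors, in the complete binary trie over $[0,u)$, of the two universe leaves flanking each interior boundary of $\mathcal{P}$. Every internal node of $\certificate{\mathcal{Q}}$ must then be an ancestor of some $w_i$, so Lemma \ref{lemma:bintrie-properties}(2) caps the number of such internal nodes by $\bigoh{(\delta-1)\lg(u/(\delta-1))}=\bigoh{\delta\lg(u/\delta)}$. Together with the full binary tree observation, this gives $|\certificate{\mathcal{Q}}|=\bigoh{\delta\lg(u/\delta)}$, completing the proof. The main delicate point is the pruning step in the core claim: one must verify that at $v$ the algorithm really does have enough local information, via the precomputed partitions encoded in each $\bintrie{S_i}$, to recognize when some $S_q$ contributes no element to $D(v)$, and hence truly terminates the recursion exactly where the counting argument demands.
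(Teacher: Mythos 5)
Your proof is correct, but it reaches the bound by a genuinely different route than the paper's. The paper works forward from the certificate intervals: it covers each interval $I_i$ (of length $L_i$) by $\bigoh{\lg L_i}$ trie nodes using Lemma~\ref{lemma:bintrie-properties}(1), argues that the synchronized traversal stops at these cover nodes, bounds the ancestors of all $\bigoh{\sum_i \lg L_i}$ cover nodes via Lemma~\ref{lemma:bintrie-properties}(\ref{item:props-bintrie-3}), and finally maximizes $\sum_i\lg{L_i}$ at $L_i=u/\delta$. You instead work backward from the recursion tree itself: you show that the dyadic interval of every \emph{internal} node must straddle one of the $\delta-1$ boundaries of the partition certificate (an interval contained in a single $I_j$ triggers a base case, either because $I_j$ is a singleton holding an output element or because some $S_q$ is empty on it --- and by Definition~\ref{def:certificate} every $I_j$ is of one of these two types), and then you invoke Lemma~\ref{lemma:bintrie-properties}(2) on the $\delta-1$ boundary LCAs. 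Your charging argument is arguably cleaner: it accounts for every node of the actual recursion tree rather than a hypothetical ``worst-case trie,'' it makes explicit \emph{why} the recursion stops inside each certificate interval, and it absorbs the concavity/maximization step into the $\bigoh{r\lg(u/r)}$ form of Lemma~\ref{lemma:bintrie-properties}(2). The delicate point you flag at the end is indeed satisfied: detecting $S_q\cap D(v)=\emptyset$ locally is exactly the event that the path to $v$ dead-ends in $\bintrie{S_q}$, which costs $\bigoh{1}$ per set and is precisely what the $\bigoh{k}$-per-node accounting assumes. The only cosmetic caveats are the degenerate case $\delta=1$ (no boundaries; the recursion tree is a single node and the bound holds trivially) and the fact that Lemma~\ref{lemma:bintrie-properties}(2) must be read as a statement about ancestors in the complete binary trie over $[0{..}u)$, which is its intended meaning.
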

\begin{proof}
Consider the smallest partition certificate $\mathcal{P}$ of query $\mathcal{Q}$, which  partitions the universe into $\delta$ intervals. Let $L_1, \ldots, L_{\delta}$ be the size of each of the intervals $I_1, \ldots, I_{\delta}$ in $\mathcal{P}$. Let us consider the worst-case trie $\certificate{\mathcal{Q}}$ we could have. This can be obtained by covering the $\delta$ intervals with trie nodes. The number of cover nodes equals the size of the worst-case trie we could have, and hence it is a bound on the time spent computing the intersection.
Lets consider an interval $I_j$ formed by elements not in $\mathcal{I(Q)}$. Notice that when traversing the tries $\bintrie{S_i}$ in coordination, $i\in \mathcal{Q}$, as long as one gets into one of the cover nodes of $I_j$, the algorithm stops at that node because it does not belong to at least one of the tries. According to Lemma \ref{lemma:bintrie-properties} (1), a contiguous range of $L$ leaves can be covered with up to $\bigoh{\lg{L}}$ nodes. Thus, in the worst-case, $\certificate{\mathcal{Q}}$ has $\bigoh{\sum_{i=1}^{\delta}{\lg{L_i}}}$ external nodes that overall cover $[0{..}u)$. Now, according to Lemma \ref{lemma:bintrie-properties} (\ref{item:props-bintrie-3}), 
these external nodes have $\bigoh{\sum_{i=1}^{\delta}{\lg{L_i}} + \lg{u}}$ ancestors, so overall $\certificate{\mathcal{Q}}$ has $\bigoh{\sum_{i=1}^{\delta}{\lg{L_i}} + \lg{u}}$ nodes. The sum is maximized when $L_i = u/\delta$, for all $1\le i \le \delta$, hence $\certificate{\mathcal{Q}}$ has $\bigoh{\delta\lg{(u/\delta)}}$ nodes. As for each node in $\certificate{\mathcal{Q}}$ we must pay time $\bigoh{k}$, and the result follows. 
\end{proof}

\section{Compressed Intersectable Sets} \label{sec:compressed-trie}

We devise next a space-efficient representation of $\bintrie{S}$, for a set $S = \{x_1, \ldots, x_n\} \subseteq [0{..}\univSize)$ of $n$ elements such that $0 \le x_1 < \cdots < x_n < \univSize$. This representation will also allow for efficient intersections, supporting Trabb-Pardo's \cite{TrabbPardo} algorithm.

\subsection{A Space-Efficient \texorpdfstring{$\bintrie{S}$}{bintrie(S)}}

We represent $\bintrie{S}$ level-wise. Let $B_1[1{..}2l_1], \ldots, B_{\ell}[1{..}2l_{\ell}]$ be bit vectors such that $B_i$ stores the $l_i$ nodes at level $i$ of $\bintrie{S}$ ($1\le i\le \ell$), from left to right. Each node is encoded using 2 bits, indicating the presence (using bit \bit{1}) or absence (bit \bit{0}) of the left and right children. In this way, 
the feasible node codes are $\bit{01}$, $\bit{10}$, and $\bit{11}$. Notice that $\bit{00}$ is not a valid node code: that would mean that the entire subtree of a node is empty, contradicting the fact that the path to which the node belongs has at least one element (as the trie only expands the paths that contain elements). The node codes of all nodes at level $i\ge 1$ in the trie are concatenated from left to right to form $B_i$.
The $j$-th node at level $i$ (from left to right) is stored at positions $2(j-1)+1$ and $2(j-1)+2$.

Let $p$ be the position in $B_i$ corresponding to a node $v$ at level $i$ of $\bintrie{S}$. 
As the nodes are stored level-wise, the number of \bit{1}s before position $p$ in $B_i$ equals the number of nodes in $B_{i+1}$ that are before the child(ren) of node $v$. So, $2\cdot B_{i}.\rank_{\bit{1}}(p-1)$ yields the position of $B_{i+i}$ where the child(ren) of node $v$ are. 
Figure \ref{fig:trie-BV} illustrates our representation.

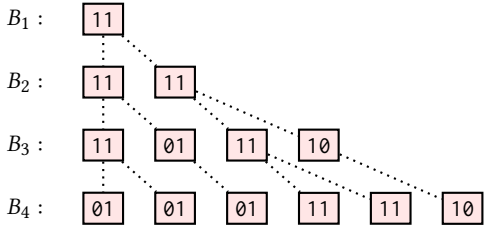
\begin{figure}[ht]
    \centering
\begin{minipage}[m]{0.49\textwidth}
\centering
\begin{tikzpicture}[
    scale=0.35, node distance=0.4cm
]

\begin{scope}[every node/.style={draw, thick,shape=rectangle,minimum width=1mm, black,fill=red!10!white}]
\node[] (A) {\texttt{11}};
\node[left = of A, draw=white, fill = white] {$B_1:$};

\node[below = of A] (B) {\texttt{11}};
\node[right = of B] (C) {\texttt{11}};
\node[left = of B, draw=white, fill = white] {$B_2:$};

\node[below = of B] (D) {\texttt{11}};
\node[right = of D] (E) {\texttt{01}};
\node[right = of E] (F) {\texttt{11}};
\node[right = of F] (G) {\texttt{10}};
\node[left = of D, draw=white, fill = white] {$B_3:$};

\node[below = of D] (H) {\texttt{01}};
\node[right = of H] (I) {\texttt{01}};
\node[right = of I] (J) {\texttt{01}};
\node[right = of J] (K) {\texttt{11}};
\node[right = of K] (L) {\texttt{11}};
\node[right = of L] (M) {\texttt{10}};
\node[left = of H, draw=white, fill = white] {$B_4:$};

\end{scope}

\begin{scope}
\draw[dotted, thick] (A)  to node {} (B);
\draw[dotted, thick] (A)  to node {} (C);

\draw[dotted, thick] (B)  to node {} (D);
\draw[dotted, thick] (B)  to node {} (E);

\draw[dotted, thick] (C)  to node {} (F);
\draw[dotted, thick] (C)  to node {} (G);

\draw[dotted, thick] (D)  to node {} (H);
\draw[dotted, thick] (D)  to node {} (I);

\draw[dotted, thick] (E)  to node {} (J);

\draw[dotted, thick] (F)  to node {} (K);
\draw[dotted, thick] (F)  to node {} (L);

\draw[dotted, thick] (G)  to node {} (M);
\end{scope}

\end{tikzpicture}
\end{minipage}
\caption{Level-wise bit vector representation of \bintrie{S} for $S = \{1, 3, 7, 8, 9, 10, 11, 12\}$. Dotted lines are just for clarity, as they are 
computed using operation $\rank_{\bit{1}}$ on the bit vectors.}
\label{fig:trie-BV}
\end{figure}

Notice that the total number of $\bit{1}$s in the bit vectors of our representation equals the number of edges in the trie. That is, there are $\trie{S}$ \bit{1}s. Besides, there are $\trie{S}+1$ nodes: $n$ of them are external, so $\trie{S}-n+1$ are internal. 
In our representation we only need to represent the internal trie nodes. As we encode each node using $2$ bits, the total space usage for $B_1, \ldots, B_\ell$ is $2(\trie{S}-n+1)$ bits. On top of them we use Clark's data structure \cite{ThesisClark} to support $\rank$ and $\select$ in $\bigoh{1}$ time, adding $o(\trie{S})$ extra bits.

\subsection{Supporting \texorpdfstring{$\rank$}{rank} and \texorpdfstring{$\select$}{select}} 

\subsubsection{Operation $\rank(S, x)$}

The main idea is to transform $\rank(S, x)$ into $B_{\ell}.\rank_{\bit{1}}(p)$, for a given position $p$ that we  compute as follows. Basically, we use the binary code of $x$ to go down from the root of $\bintrie{S}$. If $x \in S$, we will eventually reach the corresponding trie leaf at position $p$ in $B_\ell$. Then, $\rank(S, x) \equiv B_\ell.\rank_{\bit{1}}(p)$, as said before. The case $x\not \in S$ is more challenging. During the traversal, we keep the last trie node $v_{\mathrm{last}}$ along this path such that: (1) $v_{\mathrm{last}}$ is a branching node, and (2) the search path continues within the right child of $v_{\mathrm{last}}$, and (2) the left child of $v_{\mathrm{last}}$ exists (i.e., $v_{\mathrm{last}}$ is encoded $\bit{11}$).  As $x\not\in S$, we will eventually reach an internal trie node at level $l\ge 1$ that has no child corresponding to the $l$-th most-significant bit in the binary code of $x$. At this point, we must look for the leaf corresponding to $y = \predecessor(S, x)$. Notice $y$ is the largest value stored within the left subtree of node $v_{\mathrm{last}}$ (i.e., the rightmost leaf in that subtree). Thus, we start from the left child of $v_{\mathrm{last}}$, descending always to the right child. Upon reaching a node with no right child, we must go down to the left (which exists for sure), and then continue the process going to the right child again. Eventually, we will reach the corresponding trie leaf at position $p$ in $B_\ell$, and will be able to compute $\rank(S, x)$ as before.


\subsubsection{Operation $\select(S, j)$}

The $j$-th \bit{1} in $B_\ell$ corresponds to the $j$-th element in $S$. So, to compute $\select(S, j)$ we must start from $i \gets B_{\ell}.\select_{\bit{1}}(j)$, and then go up to the parent to compute the binary representation of $x_j$. Given a position $i$ in $B_{l}$, its parent can be computed as $B_{l-1}.\select_{\bit{1}}(\lceil i/2 \rceil)$. When going to the parent of the current node $v$ in the trie, we only need to know if $v$ is the right or left child of its parent, as this allows us to know each bit in the binary representation of $x_j$. In our representation, the left child of a node always correspond to an odd position within the corresponding bit vectors, whereas the right child correspond to an even position.




\subsection{Supporting Intersections}

We assume that for each $S_i\in\mathcal{S}$, $\bintrie{S_i}$ has been represented using our space-efficient trie representation from the previous subsection (that we denote $T_i$). Given a query $\mathcal{Q} = \{i_1, \ldots, i_k\} \subset [1{..}N]$, we traverse the tries $\bintrie{S_{i_1}}, \ldots , \bintrie{S_{i_k}}$ using the same recursive DFS traversal as in algorithm $\texttt{TP-Intersection}$. Besides the query itself, our algorithm receives:
    (1) an integer value, $level$, indicating the current level in the recursion, and 
    (2) the integer values $r_1, \ldots, r_k$, indicating the current nodes in each trie. These are the positions of the current nodes in each trie within their corresponding bit vectors $B_{level}$.
Algorithm \ref{alg:trie-intersection} shows the pseudo-code of our algorithm, which we call $\texttt{AC-Intersection}$, where AC stands for Adaptive and Compressed. $\texttt{AC-Intersection}$ computes our compact representation for $\bintrie{\mathcal{I(Q)}}$, which we denote $T_I$. The algorithm uses a variable $s$, initialized with $\bit{11}$, which will store the bitwise-and of all current node codes (computed on line \ref{line:bit-wise-and}). In this way, $s =\bit{00}$ indicates that recursion must be stopped at this node, $s = \bit{10}$ indicates to go down only to the left child, $s = \bit{01}$ to go down just to the right child, and $s = \bit{11}$ indicates to go down on both children. 

In lines \ref{line:if-left-child}--\ref{line:left-recursion} we carry out the needed computation to go down to the left child. We first determine whether we must go down to the left or not. In the former case, we compute the positions of the left-subtrie roots using $\rank$ operation. Then, on line \ref{line:left-recursion} we recursively go down to the left. The result of that recursion in stored in variable $lChild$, indicating with a $\bit{1}$ that the left recursion yielded a non-empty intersection, $\bit{0}$ otherwise. A similar procedure is carried out for the right children in lines \ref{line:if-right-child}--\ref{line:right-recursion}. Line \ref{line:if-left-rank} determines whether we have already computed the $\rank$s corresponding to the left child. If that is not the case, we compute them in line \ref{line:rank-right}; otherwise we avoid them. In this way we ensure the computation of only one $\rank$ operation per traversed node in the tries. Although $\rank$ can be computed in constant time, this is important in practice. Just as for the left child, we store the result of the right-child recursion in variable $rChild$ in line \ref{line:right-recursion}. Finally, in line \ref{line:if-empty-intersection} we determine whether the left and right recursions yielded an empty intersection or not. If both $lChild = 0$ and $rChild = 0$, the intersection was empty on both children. In such a case we return $\bit{0}$ indicating this fact. Otherwise, we append the value of $lChild$ and $rChild$ to $T_I.B_{level}$, as that is just the encoding of the corresponding node in the output trie $T_I$. Note how we actually generate the output trie $T_I$ in postorder, after we visited both children of the current nodes. This way, we write the output in time proportional to its size, saving time in practice.

\begin{algorithm}[ht]
\KwResult{The binary trie $T_I$ representing  $\mathcal{I(Q)}=\cap_{i\in\mathcal{Q}}{S_i}$}
\Begin{
    $s \gets \bit{11}$ \DontPrintSemicolon\Comment*[]{binary encoding}
    \For{$i \in \mathcal{Q}$}{
       $s \gets s~\&~ (T_{i}.B_{level}\left[r_i\right] \cdot T_{i}.B_{level}\left[r_i+1\right])$ \label{line:bit-wise-and}
    }
    \If{$level = \ell$}{
        $\mathtt{append}~s~\mathrm{to}~T_I.{B_\ell}$\\
        \Return{$\bit{1}$} \label{line:return-level-l}
    }
    $lChild\gets \bit{0};~rChild \gets \bit{0}$\\
    \DontPrintSemicolon\Comment*[]{Go down to the left in the tries}
    \If {$s~\mathtt{is}~\bit{10}~\mathtt{or}~\bit{11}$ \label{line:if-left-child}}{
        $lRoots\gets~\varnothing$\\ 
        \For{$i \in \mathcal{Q}$}{
            $lRoots\gets~lRoots~\cup~\{2\times T_i.B_{level}.\rank_{\bit{1}}(r_i)\}$
        }
        $lChild \gets \texttt{AC-Intersection}(\mathcal{Q}, lRoots, level + 1)$ \label{line:left-recursion}
    }
    \DontPrintSemicolon\Comment*[]{Go down to the right in the tries}
    \If {$s~\mathtt{is}~\bit{01}~\mathtt{or}~\bit{11}$ \label{line:if-right-child}}{
        $rRoots\gets~\varnothing$\\ 
        \For{$i \in \mathcal{Q}$}{
            \eIf{$s = \bit{01}$ \label{line:if-left-rank}}{
                $rRoots\gets~rRoots~\cup~\{2\times T_i.B_{level}.\rank_{\bit{1}}(r_i) + 1\}$ \label{line:rank-right}
            }{
                $rRoots\gets~rRoots~\cup~\{lRoots_i + 1\}$
            }   
        }
        $rChild \gets \texttt{AC-Intersection}(\mathcal{Q}, rRoots, level + 1)$ \label{line:right-recursion}
    }
    \DontPrintSemicolon\Comment*[]{Output written in postorder}
    \eIf{$lChild\not = \bit{0}~\mathbf{or}~rChild\not = \bit{0}$ \label{line:if-empty-intersection}}{
        $\mathtt{append}~lChild\cdot rChild~\mathtt{to}~T_I.{B_{level}}$\\
        \Return{$\bit{1}$}
    }{
        \Return{$\bit{0}$}
    }
} 
\caption{$\texttt{AC-Intersection}(\mathrm{query}~\mathcal{Q};~\mathrm{roots}~r_1,\ldots,r_k;~level)$}
\label{alg:trie-intersection}
\end{algorithm}

Besides computing $\mathcal{I(Q)}$, a distinctive feature of our algorithm is that it also allows one to obtain for free the sequence $\langle \rank(S_{i_1}, x), \linebreak[4] \ldots, \rank(S_{i_k}, x) \rangle$, for all $x \in \mathcal{I(Q)}$ . The idea is that every time we reach level $\ell$ of the tries, we compute $\langle T_1.B_{\ell}.\rank_{\bit{1}}(r_1), \linebreak[3] \ldots, \linebreak[3]  T_k.B_{\ell}.\rank_{\bit{1}}(r_k)\rangle$, just before the \textbf{return} of line \ref{line:return-level-l} in Algorithm \ref{alg:trie-intersection}.
Outputting this information is important for several applications, such as cases where set elements have satellite data associated to them.
For an element $x_j \in S_i$, the associated data $d_{j}$ is stored in array $D_i[1{..}n_i]$ such that $D[\rank(S_i, x_{j})] = d_{j}$. Typical applications are inverted indexes in IR (where ranking information, such as frequencies, is associated to inverted list elements), and the Leapfrog Triejoin algorithm \cite{Vicdt14} (where at each step we must compute the intersection of sets, and for each element in the intersection we must go down following a pointer associated to it). 

We have proved the following theorem:
\begin{theorem} \label{theor:compressed-TP}
Let $\mathcal{S} = \{S_1, \ldots, S_N\}$ be a family of $N$ integer sets, each of size $|S_i| = n_i$ and universe $[0{..}u)$. There exists a data structure able to represent each set $S_i$ using $2(\trie{S_i}-n_i+1) + o(\trie{S_i})$ bits, such that given a query $\mathcal{Q} = \{i_1, \ldots, i_k\} \subseteq [1{..}N]$, the intersection $\mathcal{I(Q)}= \cap_{i\in\mathcal{Q}}{S_i}$ can be computed in $\bigoh{k\delta\lg{(u/\delta)}}$ time, where $\delta$ is the alternation measure of $\mathcal{Q}$. Besides, for all $x \in \mathcal{I(Q)}$, the data structure also allows one to obtain the sequence $\langle \rank(S_{i_1}, x), \ldots, \rank(S_{i_k}, x)\rangle$ asymptotically for free.
\end{theorem}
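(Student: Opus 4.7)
The plan is to assemble three facts that have essentially already been established earlier in the paper and check that they compose into the theorem's statement. First I would handle the space bound. The level-wise representation of $\bintrie{S_i}$ encodes each of the $\trie{S_i}-n_i+1$ internal trie nodes in exactly 2 bits (one per child slot), since the $\bit{00}$ code is impossible, so the raw bit vectors $B_1,\ldots,B_\ell$ use $2(\trie{S_i}-n_i+1)$ bits in total. On top of them we install Clark's constant-time $\rank/\select$ index, which adds $o(\trie{S_i})$ bits, yielding the claimed space.

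Second, for the intersection time, I would appeal to Theorem \ref{theor:TP-adaptive}: \texttt{AC-Intersection} is just the implementation of \texttt{TP-Intersection} on our trie encoding, performing the same synchronized DFS over $\bintrie{S_{i_1}},\ldots,\bintrie{S_{i_k}}$, and producing the very same trie certificate $\certificate{\mathcal{Q}}$. Hence $|\certificate{\mathcal{Q}}|=\bigoh{\delta\lg{(u/\delta)}}$. It remains to argue that the work per node of $\certificate{\mathcal{Q}}$ is $\bigoh{k}$ in our compact representation: at each recursive call the algorithm reads the two bits of the current node in each of the $k$ tries, ANDs them into $s$, and, when descending, performs at most one $\rank_\bit{1}$ per trie to locate the child positions in $B_{level+1}$. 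Since $\rank_\bit{1}$ is $\bigoh{1}$ using Clark's structure, and the algorithm writes at most two bits of the output per visited node, the per-node cost is $\bigoh{k}$. Multiplying gives $\bigoh{k\delta\lg{(u/\delta)}}$.

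Third, for the ``free'' rank: once a recursion reaches level $\ell$ it confirms an element $x\in\mathcal{I(Q)}$ with its positions $r_1,\ldots,r_k$ in the leaf-level bit vectors $T_1.B_\ell,\ldots,T_k.B_\ell$. Since each leaf contributes exactly one $\bit{1}$ in its set's last bit vector in left-to-right order, we obtain $\rank(S_{i_j},x)=T_j.B_\ell.\rank_\bit{1}(r_j)$ by a single constant-time $\rank$ per trie, for a total of $\bigoh{k}$ extra work per output element. This matches the $\bigoh{k}$ work already charged to the leaf node of $\certificate{\mathcal{Q}}$, so the rank sequences are reported within the same asymptotic time.

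The only step that is not purely bookkeeping is verifying that \texttt{AC-Intersection} really does induce $\certificate{\mathcal{Q}}$ and never does more than $\bigoh{1}$ $\rank$ operations per trie per visited node; I expect the main subtlety to be the bookkeeping that prevents redundant $\rank$s on the right child (captured by the check on line \ref{line:if-left-rank} that reuses $lRoots$), and the fact that writing the output trie in postorder adds only $\bigoh{1}$ amortized work per visited node since each write appends two bits to some $T_I.B_{level}$. Once these are in place, the three bounds combine directly into the statement of the theorem.
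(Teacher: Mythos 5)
Your proposal is correct and follows essentially the same route as the paper, which states Theorem \ref{theor:compressed-TP} as a summary of the preceding construction: the $2(\trie{S_i}-n_i+1)$ bits come from encoding each internal node in two bits with Clark's $o(\cdot)$-bit $\rank$ index on top, the time bound is Theorem \ref{theor:TP-adaptive} combined with $\bigoh{k}$ work (constant-time $\rank_{\bit{1}}$ per trie) at each node of $\certificate{\mathcal{Q}}$, and the rank sequence is obtained by one $\rank_{\bit{1}}$ per trie on $B_\ell$ at each leaf. Your explicit accounting of the single-$\rank$-per-node bookkeeping and the postorder output cost is a faithful, slightly more detailed version of the paper's argument.
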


\section{Compressing Runs of Elements} \label{sec:trie-runs}

Next, we exploit the maximal runs of a set $S$ to reduce the space usage of $\bintrie{S}$, as well as intersection time. Runs tend to form full subtrees in the corresponding binary tries. See, e.g., the full subtree whose leaves correspond to elements $8, 9, 10, 11$ in the binary trie of Figure \ref{fig:binary-tries-runs}. Let $v$ be a $\bintrie{S}$ node whose subtree is full. Let $\mathsf{depth}(v) = d$. If $b$ denotes the binary string (of length $d$) corresponding to node $v$, the $2^{\ell-d}$ leaves covered by $v$ correspond to the range of integers whose binary encodings are
$$b\cdot \bit{0}^{\ell - d}, b\cdot \bit{0}^{\ell - d-1}\bit{1}, 
b\cdot \bit{0}^{\ell - d-2}\bit{10}, \ldots, b\cdot \bit{1}^{\ell-d}.$$
So, the subtree of $v$ can be removed, keeping just $v$, saving space and still being able to recover the removed elements.

\begin{figure}[ht]
    \centering
\begin{tikzpicture}[
level distance=25mm,
level 1/.style={sibling distance=144mm},
level 2/.style={sibling distance=73mm},
level 3/.style={sibling distance=35mm},
level 4/.style={sibling distance=19mm},
thick, scale=0.32
%
]

\tikzstyle{c}  = [draw, thick,shape=circle]
\tikzstyle{cb} = [draw, thick, shape=circle, blue, fill=blue!10!white, text=blue]
\tikzstyle{r}  = [draw, thick,shape=rectangle,minimum width=1mm, red,fill=red!10!white]
\tikzstyle{b}  = [draw, thick,shape=rectangle,minimum width=1mm,blue,fill=blue!10!white]
\tikzstyle{tr} = [draw, thick,isosceles triangle, shape border rotate=90, anchor=north]

\node[c]{}[edge from parent]
    child {node[c] {}
        child {node[c]{}
           child { node[c]{}
              child { edge from parent [dotted];}
              child { node[r]{\footnotesize 1}
                  edge from parent coordinate (r0);
              }
              edge from parent coordinate (l0);
           }
           child { node[c]{}
               child { edge from parent [dotted];}
               child { node[r]{\footnotesize 3}
                   edge from parent coordinate (r1);
               }
               edge from parent coordinate (r2);
           }
           edge from parent coordinate (l1);
        }
        child { node[c]{}
            child { edge from parent [dotted];}
            child { node[c]{}
                child { edge from parent [dotted];}
                child { node[r]{\footnotesize 7}
                    edge from parent coordinate (r3);    
                }
                edge from parent coordinate (r4);
            }
            edge from parent coordinate (r5);
        }
        edge from parent coordinate (l2);
    }
    child {node[c]{}
        child {node[cb]{}
            child{ node[cb]{}
                child { node[b]{\footnotesize 8}
                    edge from parent [blue] coordinate (l3);
                }
                child { node[b]{\footnotesize 9}
                    edge from parent [blue] coordinate (r6);
                }
                edge from parent [blue] coordinate (l4);
            }
            child { node[cb]{}
                child { node[b]{\footnotesize 10}
                    edge from parent [blue ]coordinate (l44);
                }
                child { node[b]{\footnotesize 11}
                    edge from parent [blue] coordinate (r7);
                }
                edge from parent [blue] coordinate (r8);
            }
            edge from parent coordinate (l5);
        }
        child {node[c]{}
            child { node[c]{}
                child { node[r]{\footnotesize 12}
                    edge from parent coordinate (l6);
                }
                child { edge from parent [dotted];}
                edge from parent coordinate (l7);
            }
            child { edge from parent [dotted];}
            edge from parent coordinate (r9);
        }
        edge from parent coordinate (r10);
    };
\node at([xshift=-3mm,yshift=2mm]l0)  {\tt \footnotesize 0};
\node at([xshift=-3mm,yshift=2mm]l1)  {\tt \footnotesize 0};
\node at([xshift=-3mm,yshift=2mm]l2)  {\tt \footnotesize 0};
\node at([xshift=-3mm,yshift=2mm]l3)  {\tt \footnotesize 0};
\node at([xshift=-3mm,yshift=2mm]l4)  {\tt \footnotesize 0};
\node at([xshift=-3mm,yshift=2mm]l5)  {\tt \footnotesize 0};
\node at([xshift=-3mm,yshift=2mm]l6)  {\tt \footnotesize 0};
\node at([xshift=-3mm,yshift=2mm]l7)  {\tt \footnotesize 0};
\node at([xshift=-3mm,yshift=2mm]l44)  {\tt \footnotesize 0};
\node at([xshift=3mm,yshift=2mm]r0) {\tt \footnotesize 1};
\node at([xshift=3mm,yshift=2mm]r1) {\tt \footnotesize 1};
\node at([xshift=3mm,yshift=2mm]r2) {\tt \footnotesize 1};
\node at([xshift=3mm,yshift=2mm]r3) {\tt \footnotesize 1};
\node at([xshift=3mm,yshift=2mm]r4) {\tt \footnotesize 1};
\node at([xshift=3mm,yshift=2mm]r5) {\tt \footnotesize 1};
\node at([xshift=3mm,yshift=2mm]r6) {\tt \footnotesize 1};
\node at([xshift=3mm,yshift=2mm]r7) {\tt \footnotesize 1};
\node at([xshift=3mm,yshift=2mm]r8) {\tt \footnotesize 1};
\node at([xshift=3mm,yshift=2mm]r9) {\tt \footnotesize 1};
\node at([xshift=3mm,yshift=2mm]r10) {\tt \footnotesize 1};
\end{tikzpicture}

\vspace{0.75cm}

\begin{tikzpicture}[
    scale=0.35, node distance=0.4cm
]

\begin{scope}[every node/.style={draw, thick,shape=rectangle,minimum width=1mm, black,fill=red!10!white}]
\node[] (A) {\texttt{11}};
\node[left = of A, draw=white, fill = white] {$B_1:$};

\node[below = of A] (B) {\texttt{11}};
\node[right = of B] (C) {\texttt{11}};
\node[left = of B, draw=white, fill = white] {$B_2:$};

\node[below = of B] (D) {\texttt{11}};
\node[right = of D] (E) {\texttt{01}};
\node[right = of E, text=blue, shape=rectangle, blue, fill=blue!10!white] (F) {\texttt{00}};
\node[right = of F] (G) {\texttt{10}};
\node[left = of D, draw=white, fill = white] {$B_3:$};

\node[below = of D] (H) {\texttt{01}};
\node[right = of H] (I) {\texttt{01}};
\node[right = of I] (J) {\texttt{01}};
\node[below = of G] (M) {\texttt{10}};
\node[left = of H, draw=white, fill = white] {$B_4:$};

\end{scope}

\begin{scope}
\draw[dotted, thick] (A)  to node {} (B);
\draw[dotted, thick] (A)  to node {} (C);

\draw[dotted, thick] (B)  to node {} (D);
\draw[dotted, thick] (B)  to node {} (E);

\draw[dotted, thick] (C)  to node {} (F);
\draw[dotted, thick] (C)  to node {} (G);

\draw[dotted, thick] (D)  to node {} (H);
\draw[dotted, thick] (D)  to node {} (I);

\draw[dotted, thick] (E)  to node {} (J);


\draw[dotted, thick] (G)  to node {} (M);
\end{scope}

\end{tikzpicture}

\caption{Above, the binary trie representing set $\{1, 3, 7, 8, 9, 10, 11, 12\}$. Notice that the subtree whose leaves correspond to elements $8, 9, 10, 11 $ is a full subtree. Below, our compact representation removing full subtrees and encoding their roots with \bit{00}.}
\label{fig:binary-tries-runs}
\end{figure}
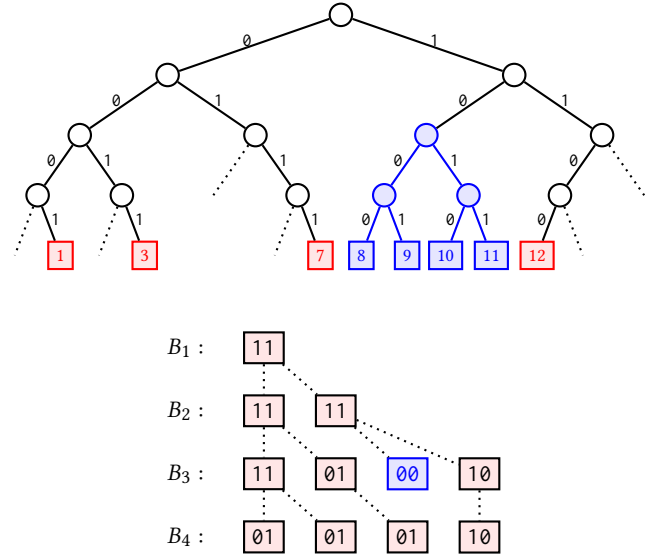

\begin{definition} \label{def:rTrie}
Let $S \subseteq [0{..}u)$ be a set of $n$ elements. We define $\trieRun{S}$ as the number of edges in $\bintrie{S}$ after removing the maximal full subtrees.  
\end{definition}
Notice this immediately implies $\trieRun{S} \le \trie{S} \le 2\gap{S}$.
We can, however, prove tighter bounds.
Assume that $S$ can be partitioned into the maximal runs $R_1, \ldots, R_r$, each of $\ell_i = |R_i|$ elements. For a given $R_i$, notice that its elements correspond to $\ell_i$ contiguous leaves in $\bintrie{S}$. According to Lemma \ref{lemma:bintrie-properties} (item 1), these $R_i$ contiguous leaves are covered by at most $2\lfloor\lg{(\ell_i/2)}\rfloor$ cover nodes. This is the case that removes the least edges, so we analyze it.  Among the cover nodes, there are 2 whose subtree has $0$ edges, 2 whose subtree has 2 edges, 2 whose subtree has 6 edges, and so on. In general, for each $i = 1, \ldots, \lfloor\lg{(\ell_i/2)}\rfloor$, there are 2 cover nodes whose subtree has $2^i-2$ edges. If we remove all these subtrees, the total number of nodes removed is 
$2\sum_{i=i}^{\lfloor\lg{(\ell_i/2)}\rfloor}{(2^{i}-2)} = 2\ell_i - 4\lg{\ell_i}$.
Since this is the case that removes the least edges belonging to full subtrees, we can bound
\begin{equation}\label{eq:rtrie}
\trieRun{S} \le \trie{S} - \sum_{i=1}^{r}{(2\ell_i - 4\lg{\ell_i})}.
\end{equation}
We can also prove the following bounds.
\begin{lemma} \label{lemma:trierun}
Given a set $S\subseteq [0{..}u)$ of $n$ elements, it holds that
\begin{enumerate}
\item $\trieRun{S} \le 2\cdot\min{\{\rle{S} + \sum_{i=1}^{r}{\lg{\ell_i}}, \gap{S}\}}$. \label{item:trierun-1}
\item $\exists a \in [0{..}u)$, such that $$\trieRun{S+a} \le \min{\{\rle{S} - \sum_{i=1}^{r}{\ell_i} + 3\sum_{i=1}^{r}{\lg{\ell_i}},\gap{S}\}} + 2n -2.$$ \label{item:trierun-2}
\item $\trieRun{S+a} \le \min{\{\rle{S}  - \sum_{i=1}^{r}{\ell_i} + 3\sum_{i=1}^{r}{\lg{\ell_i}},\gap{S}\}} + 2n -2$ on average, assuming $a \in [0{..}u)$ is chosen uniformly at random. \label{item:trierun-3}
\end{enumerate}
\end{lemma}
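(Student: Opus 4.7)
The plan is to derive all three items uniformly from Equation (\ref{eq:rtrie}), which I rewrite (using $\sum_i\ell_i=n$) as
\[
\trieRun{S}\;\le\;\trie{S}-2n+4\sum_{i=1}^r\lg\ell_i,
\]
and combine with the three bounds on $\trie{\cdot}$ supplied by Lemma \ref{lemma:trie}. The bridge between the $\gap{S}$-based and $\rle{S}$-based forms is the elementary identity (up to the $\pm 1$ rounding in the $\lfloor\lg\cdot\rfloor+1$ definitions)
\[
\gap{S}\;=\;n+\sum_{i=1}^r\lg z_i\;=\;\rle{S}+n-\sum_{i=1}^r\lg\ell_i,
\]
which follows because the $n-r$ positions internal to a run contribute one bit each to $\gap{S}$, while the $r$ run-starting positions contribute $\lfloor\lg z_i\rfloor+1$ bits.

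For item (\ref{item:trierun-1}), the $\gap{S}$ inequality is immediate: by definition $\trieRun{S}\le\trie{S}$, and $\trie{S}\le 2\gap{S}$ by Lemma \ref{lemma:trie}(\ref{item:trie-1}). For the $\rle{S}$ inequality, I would feed $\trie{S}\le 2\gap{S}$ into the displayed bound above and substitute the identity, landing on $\trieRun{S}\le 2\rle{S}+2\sum_i\lg\ell_i=2(\rle{S}+\sum_i\lg\ell_i)$.

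For item (\ref{item:trierun-2}), I would take the shift $a$ guaranteed by Lemma \ref{lemma:trie}(\ref{item:trie-2}). The $\gap{S}$ half is then direct: $\trieRun{S+a}\le\trie{S+a}\le\gap{S}+2n-2$. For the $\rle{S}$ half, I would substitute $\trie{S+a}\le\gap{S}+2n-2$ into Equation (\ref{eq:rtrie}) \emph{applied to $S+a$}, obtaining $\trieRun{S+a}\le\gap{S}+4\sum_j\lg\ell'_j-2$ (where the $\ell'_j$ are the run lengths of $S+a$), and then use the identity together with $\sum\ell_i=n$ to arrive at $\rle{S}-n+3\sum_i\lg\ell_i+2n-2$, which is exactly what the lemma asserts. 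Item (\ref{item:trierun-3}) follows from the exact same chain of inequalities, but invoking the average-case statement Lemma \ref{lemma:trie}(\ref{item:trie-3}) in place of the existential (\ref{item:trie-2}).

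The main technical obstacle is that the cyclic shift $a$ may cross the boundary $u-1\mapsto 0$ through the middle of some run of $S$, so that the run lengths $\ell'_j$ of $S+a$ are not merely a permutation of the $\ell_i$. I would resolve this either by choosing $a=u-\max(S)-1$ (aligning the largest element of $S$ to $u-1$, which preserves every run exactly, so that $\sum_j\lg\ell'_j=\sum_i\lg\ell_i$), or, if forced to use the same $a$ as in the proof of Lemma \ref{lemma:trie}(\ref{item:trie-2}), by observing that at most one run splits in two and using $\lg\ell_a+\lg(\ell_i-\ell_a)\le 2\lg\ell_i$ to keep $\sum_j\lg\ell'_j$ within the constants claimed by the lemma.
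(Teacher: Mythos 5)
Your proof follows the paper's argument essentially verbatim: item \ref{item:trierun-1} is obtained by combining Equation (\ref{eq:rtrie}) with $\trie{S}\le 2\gap{S}$ and the run-wise rewriting of $\gap{S}$ (your identity $\gap{S}=\rle{S}+n-\sum_{i}\lg\ell_i$ is exactly the paper's rewriting $\gap{S}=\sum_i(\lfloor\lg(z_i-1)\rfloor+1)+\sum_i(\ell_i-1)$ packaged as an equation), and items \ref{item:trierun-2}--\ref{item:trierun-3} follow by the same substitution using Lemma \ref{lemma:trie}(\ref{item:trie-2})--(\ref{item:trie-3}), which is all the paper itself says about them. Your extra discussion of runs splitting under the cyclic shift addresses a point the paper silently ignores; just note that you cannot re-choose $a=u-\max(S)-1$ there, since the bound of Lemma \ref{lemma:trie}(\ref{item:trie-2}) is guaranteed only for its specific shift, so the ``at most one run splits'' argument is the one to rely on (and it costs an additive $\bigoh{\lg n}$ that the lemma's $\approx$-level constants absorb).
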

\begin{proof}
As $S$ has maximal runs $R_1, \ldots, R_r$, each of $\ell_i = |R_i|$ elements, 
notice that we can rewrite $$\gap{S} = \sum_{i=1}^{r}{(\lfloor\lg{(z_i-1)}\rfloor + 1)} + \sum_{i=1}^{r}{(\ell_i - 1)}.$$  
Since $\trieRun{S} \le \trie{S} \le 2\gap{S}$, and $\trieRun{S} \le \trie{S} - \sum_{i=1}^{r}{(2\ell_i - 4\lg{\ell_i})}$ (Equation \ref{eq:rtrie}), it holds that
\begin{align*} 
\trieRun{S} & \le  \trie{S} - \sum_{i=1}^{r}{(2\ell_i - 4\lg{\ell_i})} \\
            & \le  2(\sum_{i=1}^{r}{(\lfloor\lg{(z_i-1)}\rfloor + 1)} \\
            &      + \sum_{i=1}^{r}{(\ell_i - 1)}) - \sum_{i=1}^{r}{(2\ell_i - 4\lg{\ell_i})} \\
            & =    2\sum_{i=1}^{r}{(\lfloor\lg{(z_i-1)}\rfloor + 1)} + 4\sum_{i=1}^{r}{\lg{\ell_i}} \\
            & \approx  2(\rle{S} + \sum_{i=1}^{r}{\lg{\ell_i}}). 
\end{align*}

This proves item 1. The remaining items can be proved similarly from items 2 and 3 of Lemma \ref{lemma:trie}.
\end{proof}

In our compact representation, we encode a cover node whose full subtree has been removed using $\bit{00}$. Recall that $\bit{00}$ is an invalid node encoding, hence we use it now as a special mark. See Figure \ref{fig:binary-tries-runs} for an illustration. It remains now to explain how to carry out operations on this representation.

\subsection{\texorpdfstring{$\rank(S, x)$}{rank}}

We proceed mostly as explained for our original representation, traversing the path from the root to the leaf representing $y = \predecessor(S, x)$. However, this time there can be subtrees that have been removed as explained, hence $\rank(S, x)$ not necessarily corresponds to the $\rank$ up to the $\bit{1}$ we arrive at the last level $B_\ell$: this $\rank$ only gives partial information to compute the operation. It remains to account for all removed $\bit{1}$s corresponding to leaves of complete subtries. To do so, at each level of the trie we must regard the $\bit{00}$s that lie to the left of the path we are following. Notice that every $\bit{00}$ at depth $1\le l \le \ell$ corresponds to a full subtrie of $2^{\ell-l+1}$ leaves that have been removed from $B_\ell$. 
To account for them, we keep a variable $d$ during the traversal, initialized as $d \gets 0$. At each level $l\ge 1$, being at position $i$ of $B_l$, we carry out $d \gets d + 2^{\ell-l+1}\cdot B_{l}.\rank_{00}(i)$, where $\rank_{00}$ is the number of nodes encoded $\bit{00}$ before position $i$ in $B_{l}$. Notice this is different to the number of $\bit{00}$s before position $i$, in our case we need to count the number of $\bit{00}$s that are aligned with odd positions (and hence represent removed nodes). 




\subsection{Intersection}

Given a query $\mathcal{Q} = \{i_1, \ldots, i_k\} \subseteq [1{..}N]$, the procedure is similar to that of Algorithm \ref{alg:trie-intersection}. The only difference is that if in a given trie $\bintrie{S_i}$ we arrive at a node encoded $\bit{00}$, we can temporarily exclude that trie from the intersection without affecting the result. The rationale is that every possible element below that node belongs to $S_i$, hence the intersection within that subtree is independent of $S_i$ and we can temporarily exclude it. To implement this idea, we keep boolean flags $f_1, \ldots, f_k$ such that $f_j$ corresponds to $\bintrie{S_{i_j}}$. Initially, we set $f_i \gets \true$, for $1\le i\le k$.  
If, during the intersection process, we arrive at a node encoded \bit{00} in $\bintrie{S_i}$, we set $f_i \gets \false$. The idea is that at each node visited during the recursive procedure, only the tries whose flag is $\true$ participate in the intersection. The remaining ones are within a full subtrie, so they are currently excluded. When the recursion at a node encoded \bit{00} in $\bintrie{S_i}$ finishes, we set $f_i \gets \true$ again.
If, at a given point, all tries have been temporarily excluded but one, let us say $\bintrie{S_j}$, we only need to traverse the current subtree in $S_j$, copying it verbatim to the output. If this subtree contains nodes encoded $\bit{00}$, they will appear in the output. This way, the maximal runs of successive elements in the output will be covered by nodes encoded $\bit{00}$. This fact is key for the adaptive running time of our algorithm, as we shall see below.

\subsection{Running Time Analysis}

Next we analyze the running time of our intersection algorithm on $\trieRun{S}$ compressed sets. As we have seen, runs can be exploited to use less space. Now, we also prove that runs can be exploited to improve intersection computation time. The rationale is that if we intersect sets with runs of successive elements, very likely the output will have some runs too. We will show that our algorithm is able to provide a smaller partition certificate when there are runs in the output. Next, we redefine the concept of partition certificate \cite{BKtalg08}, taking into account the existence of runs in the output.
\begin{definition}
\label{def:certificate-runs}
Given a query $\mathcal{Q} = \{i_1, \ldots, i_k\} \subseteq [1{..}N]$,
a run-partition certificate for it is a partition of the universe 
$[0{..}\univSize)$ into a set of intervals $\mathcal{P} = \{I_1, I_2, \ldots, I_p\}$, 
such that the following conditions hold:
\begin{enumerate}
\item $\forall x \in \mathcal{I(Q)}, \exists I_j \in \mathcal{P}$, $x\in I_j \wedge \mathcal{I(Q)}\cap I_j = I_j$; 
\label{item:1-runs}

\item $\forall x \not \in \mathcal{I(Q)}, \exists I_j \in \mathcal{P}, x \in I_j~\wedge~\exists q \in \mathcal{Q}, S_q \cap I_j = \emptyset$.
\label{item:2-runs}

\end{enumerate}
\end{definition}
The second item is the same as Barbay and Kenyon's partition certificate, and correspond to intervals that cover the elements not in the intersection. The first item, on the other hand, correspond to the elements in the intersection. Unlike Barbay and Kenyon, in our partition certificate elements in the intersection are not necessarily covered by singletons: if there is a run of successive elements in the intersection, our definition allows for them to be covered by a single interval in the certificate. 

\begin{definition}
For a given query $\mathcal{Q} = \{i_1, \ldots, i_k\} \subseteq [1{..}N]$, let $\xi$ denote the size of the smallest run-partition certificate of $\mathcal{Q}$.
\end{definition}
It is easy to see that $\xi \le \delta$ holds. Besides, although $|\mathcal{I(Q)}| \le \delta$ holds, in our case there can be query instances such that $\xi < |\mathcal{I(Q)}|$.
Figure \ref{fig:partition-runs} illustrates our definition for an intersection of 4 sets on the universe $[0{..}15)$. Notice that $\xi = 5$ for this example, whereas $|\mathcal{I(Q)}| = 6$ and $\delta = 9$.

\begin{figure}
\begin{tabular}{ccccc|cc|c|cccc|cccccc|cc}
$S_{i_1}:$ &   &   &  & 7 & 8 & 9  & 10 & 11 & 12 & 13 & 14 & 15 \\
$S_{i_2}:$ &  & 5 & 6 & 7 & 8 & 9  & 10 & 11 & 12 & 13 & 14 &   \\
$S_{i_3}:$ & 4 & 5 & 6 & 7 & 8 & 9 &    & 11 & 12 &  13 & 14   &  \\
$S_{i_4}:$ &  &  &  &  & 8 & 9     & 10 & 11 & 12 & 13 & 14 & 15 \\
\end{tabular}
\caption{A query $\mathcal{Q} = \{S_{i_1}, S_{i_2}, S_{i_3}, S_{i_4}\}$ and its smallest run-partition certificate
$\mathcal{P} = \{[0{..}7], \linebreak[3] [8{..}9], \linebreak[3] [10{..}10], \linebreak[3] [11{..}14], \linebreak[3] [15{..}15]\}$  
of size $\xi = 5$.}
\label{fig:partition-runs}
\end{figure}


Our main result is stated in the following theorem:
\begin{theorem} \label{theor:run-compressed-TP}
Let $\mathcal{S} = \{S_1, \ldots, S_N\}$ be a family of $N$ integer sets, each of size $|S_i| = n_i$ and universe $[0{..}u)$. There exists a data structure able to represent each set $S_i$ using $2\trieRun{S_i} (1+ o(\trieRun{S_i}))$ bits, such that given a query $\mathcal{Q} = \{i_1, \ldots, i_k\} \subseteq [1{..}N]$, the intersection $\mathcal{I(Q)}= \cap_{i\in\mathcal{Q}}{S_i}$ can be computed in $\bigoh{k\xi\lg{(u/\delta)}}$ time, where $\xi$ is the run alternation measure of $\mathcal{Q}$. 
\end{theorem}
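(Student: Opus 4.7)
The plan is to mirror Theorem \ref{theor:TP-adaptive}, adapting its space and recursion-tree-size arguments to the run-compressed trie.

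For the space bound, I observe that in the encoding of Section \ref{sec:trie-runs} each stored node uses exactly two bits (one of $\bit{01},\bit{10},\bit{11}$, or $\bit{00}$ for the root of an erased full subtree); by Definition \ref{def:rTrie} their total count is $\trieRun{S_i}+\bigoh{1}$, so the payload is $2\trieRun{S_i}+\bigoh{1}$ bits. On top I install Clark's $\bigoh{1}$-time $\rank$/$\select$ index, together with a companion bit vector marking odd-aligned $\bit{00}$ codes so that $\rank_{\bit{00}}$ (needed by the $\rank(S,x)$ procedure of Section \ref{sec:trie-runs}) is answered in $\bigoh{1}$ time. These auxiliaries contribute $o(\trieRun{S_i})$ bits, giving the claimed $2\trieRun{S_i}(1+o(1))$-bit total.

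For the time bound, I fix a query $\mathcal{Q}$ and let $\mathcal{P}=\{I_1,\ldots,I_\xi\}$ be the smallest run-partition certificate, with $|I_j|=L_j$ and $\sum_j L_j=u$. Every external node of the recursion tree $\certificate{\mathcal{Q}}$ of \texttt{AC-Intersection} lies in exactly one $I_j$ and is of one of two types: a type-(i) empty-interval witness (some trie has no element in the current subrange) or a type-(ii) run witness (all still-active tries reach $\bit{00}$ codes simultaneously, or only one trie remains active and its verbatim copy hits a $\bit{00}$). By item~(1) of Lemma \ref{lemma:bintrie-properties}, the $L_j$ consecutive leaves inside $I_j$ are covered by $\bigoh{\lg L_j}$ nodes in each trie, so $\certificate{\mathcal{Q}}$ has $\bigoh{\lg L_j}$ external nodes within $I_j$ regardless of witness type. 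Summing over $\mathcal{P}$ yields $\bigoh{\sum_j \lg L_j}$ external nodes that cover $[0{..}u)$ contiguously; item~(\ref{item:props-bintrie-3}) of the same lemma then bounds $|\certificate{\mathcal{Q}}|$ by $\bigoh{\sum_j \lg L_j+\lg u}$. To obtain the $\lg(u/\delta)$ factor rather than $\lg(u/\xi)$, I refine $\mathcal{P}$ through the smallest Barbay--Kenyon partition certificate of size $\delta\ge\xi$: since no subrange of size smaller than $u/\delta$ can further refine a witness beyond a single cover node, each subtree of $\certificate{\mathcal{Q}}$ rooted at an interval of $\mathcal{P}$ has height $\bigoh{\lg(u/\delta)}$, producing $|\certificate{\mathcal{Q}}|=\bigoh{\xi\lg(u/\delta)}$. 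Multiplying by the $\bigoh{k}$ per-node work (a bitwise descent with one $\rank$ per active trie) yields the claimed $\bigoh{k\xi\lg(u/\delta)}$ time bound.

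The main obstacle I anticipate is the type-(ii) bound together with the depth refinement: showing that processing an interval $I_j\subseteq\mathcal{I(Q)}$ contributes $\bigoh{\lg(u/\delta)}$ nodes \emph{in aggregate} rather than $\bigoh{k\lg(u/\delta)}$. The key is the interplay between the simultaneous-$\bit{00}$ rule (all active tries align at a cover node) and the single-active-trie verbatim-copy rule, which amortizes any remaining exploration against the size of the emitted output subtree rather than against $k$. Once formalized, both the time and space bounds follow essentially as in Theorems \ref{theor:TP-adaptive} and \ref{theor:compressed-TP}.
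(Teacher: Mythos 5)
Your core argument coincides with the paper's: take the smallest run-partition certificate $\{I_1,\ldots,I_\xi\}$ with interval sizes $L_1,\ldots,L_\xi$, reuse the analysis of Theorem \ref{theor:TP-adaptive} for the intervals disjoint from $\mathcal{I(Q)}$, observe that an interval $I_j\subseteq\mathcal{I(Q)}$ also contributes only $\bigoh{\lg L_j}$ nodes because the traversal halts at the $\bigoh{\lg L_j}$ cover nodes of the output run (the $\bit{00}$ nodes), and then charge $\bigoh{k}$ per node of $\certificate{\mathcal{Q}}$. Your space accounting, which the paper's proof omits entirely, is a sensible transplant of Theorem \ref{theor:compressed-TP}.

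The genuine gap is the step you yourself flag as delicate: the ``refinement through the Barbay--Kenyon certificate'' that is supposed to turn $\lg(u/\xi)$ into $\lg(u/\delta)$. The claim that each subtree of $\certificate{\mathcal{Q}}$ rooted at an interval of $\mathcal{P}$ has height $\bigoh{\lg(u/\delta)}$ does not follow from anything: the $\delta$ intervals of the Barbay--Kenyon certificate include one singleton per output element, so a large $\delta$ implies nothing about the depth at which cover nodes sit. In fact the bound $\bigoh{k\xi\lg(u/\delta)}$ cannot be rescued this way: take $S_1=[0..u/2]$ and $S_2=[1..u)$, so that $\mathcal{I(Q)}=[1..u/2]$ is a single unaligned run; then $\xi=3$ while $\delta=u/2+2$, hence $\xi\lg(u/\delta)=\bigoh{1}$, yet covering $[1..u/2]$ already forces $\Theta(\lg u)$ nodes in $\certificate{\mathcal{Q}}$. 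What Lemma \ref{lemma:bintrie-properties} actually yields --- and what the paper's own proof derives before the theorem statement silently replaces $\xi$ by $\delta$ in the denominator --- is $|\certificate{\mathcal{Q}}|=\bigoh{\sum_j\lg L_j+\lg u}=\bigoh{\xi\lg(u/\xi)}$. Drop the refinement step and state the time bound as $\bigoh{k\xi\lg(u/\xi)}$; with that correction the rest of your argument is sound and matches the paper's.
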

\begin{proof}
Consider the smallest partition certificate $\mathcal{P}$ of query $\mathcal{Q}$, which partitions the universe $[0{..}u)$ into $\xi$ intervals $I_1, \ldots, I_{\xi}$, of size $L_1, \ldots, L_{\xi}$, respectively. Let us cover the $\xi$ intervals with trie nodes, in order to bound the number of nodes of $\certificate{\mathcal{Q}}$ (and, hence, the running time). As we already saw in the proof of Theorem \ref{theor:TP-adaptive}, all intervals $I_i$ such that $I_i\cap\mathcal{I(Q)} = \emptyset$ are covered by at most $\bigoh{\lg{L_i}}$ nodes in $\certificate{\mathcal{Q}}$. We now prove the same for intervals $I_j \subseteq \mathcal{I(Q)}$. The only thing to note is that our algorithm stops as soon as it arrives to a node covering successive elements in the output. As there can be $\bigoh{\lg{L_j}}$ such cover nodes, we can prove that $[0{..}u)$ can be covered by $\bigoh{\sum_{i=1}^{\xi}{\lg{L_i}}} = \bigoh{\sum_{i=1}^{\xi}{\lg{\!(u/\xi)}}}$ $\certificate{\mathcal{Q}}$ nodes, hence $\certificate{\mathcal{Q}}$ has $\bigoh{\bigoh{\xi\lg{\!(u/\xi)}}}$ nodes overall. The result follows from the fact that at each node the time is $\bigoh{k}$. 
\end{proof}

\comentar{
\subsection{The Overall Representation}

To represent the family $S_1, \ldots, S_N$, we first choose integer values $a_1, \ldots, a_N \in [0{..}u)$ uniformly at random. Then, we store $\bintrie{S_1+a_1}, \ldots, \bintrie{S_N+a_N}$, along with the shift sequence $a_1, \ldots, a_N$. The overall space is up to $\sum_{i=1}^{N}{\trieRun{S+a_i}(2+o(1))} + N\lg{u}$ bits. Since the $a_i$ have been chosen uniformly at random, by item 3 of Lemma \ref{lemma:trierun} the space usage is on average
$$\le\sum_{i=1}^{N}{\left[\left(\min{\left\{\rle{S_i}  - \sum_{j=1}^{r_i}{\ell_{i,j}} + 3\sum_{j=1}^{r_i}{\lg{\ell_{i,j}}},\gap{S_i}\right\}} + 2n_i -2\right)(2+o(1))\right]} + N\lg{u},$$
assuming set $S_i$ has $r_i$ maximal runs of length $\ell_{i, 1}, \ell_{i, 2}, \ldots, \ell_{i, r_i}$.
\ojo{Ahora los universos nos quedan shifteados, entonces hay que tenerlo en cuenta para intersectar}.
Aqui hay que notar que $S+a$ tiene el universo shifteado en $a$: es decir, el $0$ para $S$ es equivalente a $a$ para $S+a$. Hay que notar tambien que el algoritmo de interseccion de los conjuntos sobre el universo original $[0{..}u)$ en realidad en todo momento está buscando al siguiente entero que es candidato a estar en la interseccion. Originalmente, ese entero es el 0. Pero apenas se falla en bajar en uno de los tries, el bit correspondiente del entero candidato se cambia por 1 y se sigue buscando. Y asi siguiendo. Al estar todos los universos shifteados de manera distinta, se debe hacer algo similar pero ahora manteniendo $k$ enteros candidatos, que originalmente son iguales al valor con que se ha shifteado cada uno de los universos. De esa manera, estamos inicializando a cada candidato con el correspondiente al 0 de cada universo. Hay que mantener una pila y un finger por cada trie a intersectar. La idea es recorrer en preorder usando esas pilas (ahora los recorridos ya no son sincronizados por todos los tries, por lo que una implementacion recursiva no es tan simple, se prefiere una iterativa), siempre usando el candidato para dirigir la busqueda. Creo que necesito ir llevando una suma de cuánto le he sumado al candidato, para que cuando me toque con cada consulta ya saber cuanto le he sumado y asi actualizarlo sin tener que estar pagando $\Theta (k)$ por cada incremento del candidato.  

Notar que ahora el problema que queremos resolver es:
\[
\bigcap_{i \in Q}{(S_i+a_i)} = \{j~|~(j+a_{i_1})\bmod{\univSize}\in (S_{i_1}+a_{i_1})\wedge\cdots\wedge (j+a_{i_k})\bmod{\univSize}\in (S_{i_k}+a_{i_k})\}
\]

\ojo{Aqui explicar dos cosas: si queremos aplicar Trab-Pardo, hay que shiftear a todos los conjuntos por el mismo valor $a$. La desventaja es que no podemos agregar mas conjuntos a la familia. Si queremos aplicar Barbay y Kenyon, podemos usar un $a_i$ distinto para cada conjunto, y ahora hay que hacer el shifteo al buscar el predecesor. Esto a su vez permite agregar y borrar nuevos conjuntos, lo cual es super intersante en muchas aplicaciones.}

%
}

\section{Implementation}

\subsection{Implementing the Tries}

We implemented bit vectors $B_1, \ldots, B_\ell$ in plain form using class \texttt{bit\_vector<>} from the \texttt{sdsl} library \cite{GPspe14}. We support $\rank_{\bit{1}}$ on them using different data structures and obtain:
\begin{description}
\item[\texttt{trie v}, \texttt{rTrie v}:] the variants defined in Section \ref{sec:compressed-trie} and \ref{sec:trie-runs}, respectively, using \texttt{rank\_support\_v} for $\rank_{\bit{1}}$. It uses 25\% extra space on top of the bit vector, and supports $\rank_{\bit{1}}$ in $\bigoh{1}$ time. 
\item[\texttt{trie v5}, \texttt{rTrie v5}:] use \texttt{rank\_support\_v5}, requiring 6.25\% extra space on top of the bit vectors, and supporting $\rank_{\bit{1}}$ in $\bigoh{1}$ time. This alternative uses less space, yet it is slower in practice. 

\item[\texttt{trie IL}, \texttt{rTrie IL}:] use  \texttt{rank\_support\_il}, aiming at reducing the number of cache misses to compute $\rank_{\bit{1}}$. We use block size $512$, requiring 12.5\% extra space on top of the bit vectors, while supporting $\rank_{\bit{1}}$ in constant time.  
\end{description}
We do not store any $\rank_{\bit{1}}$ data structure for the last-level bit vector $B_\ell$. So, we are not able to compute $\rank(S, x)$ for a given set $S$ (recall that this operation is equivalent to a $\rank_{\bit{1}}$ on the corresponding position of $B_\ell$). This is in order to be fair, as most state-of-the-art alternatives we tested do not support this operation. 
Also, we do not use data structures for $\select_{\bit{1}}$ in our implementation, as we only test intersections. 

\subsection{Implementing the Intersection Algorithm}

We implemented Algorithm \ref{alg:trie-intersection} on our compressed trie representation, as well as the variant that eliminates subtrees that are full because of runs. We follow the descriptions from Sections \ref{sec:compressed-trie} and \ref{sec:trie-runs} very closely. We implemented, however, two alternatives for representing the output: (1) the binary trie representation, and (2) the plain array representation. In our experiments we will use the latter, to be fair: all testes alternatives produce their outputs in plain form.

\subsection{A Multithreaded Implementation}

Our intersection algorithm can be implemented in a multithreading architecture quite straightforwardly. Let $t$ denote the number of available threads. Then, we define $c = \lfloor\lg{t}\rfloor$. Hence, our algorithm proceeds as in Algorithm \ref{alg:trie-intersection}, generating a binary trie of height $c$ (that we will call \emph{top trie}), with at most $t$ leaves. Then, we execute again Algorithm \ref{alg:trie-intersection}, this time in parallel with each thread starting from a different leaf of the top trie. If there are less than $t$ leaves in the top trie, we go down until the $t$ threads have been allocated. 
Each thread generates its own output in parallel, using our compact trie representation. Once all threads finish, we concatenate these tries to generate the final output. We just need to count, in parallel, how many nodes there are in each level of the trie. Then, we allocate a bit vector of the appropriate size for each level, and each thread will write its own part of the output in parallel. This is just a simple approach that does not guarantees load balancing among threads, however it works relatively well in practice.


\subsection{Source Code Availability}

Our source code and instructions to replicate our experiments are available at \url{https://github.com/jpcastillog/compressed-binary-tries}.

\section{Experimental Results}

Next, we experimentally evaluate our approaches.

\subsection{Experimental Setup}

\paragraph{Hardware and Software}
All experiments were run on server with an i7 10700k CPU with 8 cores and 16 threads, with disable turbo boost running at $4.70$ GHz in all cores and running Ubuntu $20.04$ LTS operating system. We have $32$GB of RAM(DDR4-$3.6$GHz) running in dual channel. Our implementation is developed in C++, compiled with \texttt{g++} $9.3.0$ and optimization flags \texttt{-O3} and \texttt{-march=native}. 

\paragraph{Datasets and Queries.} 
In our tests, we used family of sets corresponding to inverted indexes of three standard document collections: Gov2 \footnote{\url{https://www-nlpir.nist.gov/projects/terabyte/}}, ClueWeb09 \footnote{\url{https://lemurproject.org/}}, and CC-News \cite{ccnews2020}. 
For Gov2 and ClueWeb09 collections, we used the freely-available inverted indexes and query log by D.~Lemire \footnote{See \url{https://lemire.me/data/integercompression2014.html} for download details.}, corresponding to the URL-sorted document enumeration \cite{Silvestri07}, which tends to yield runs of successive elements in the sets. 
The query log contains 20{,}000 random queries from the TREC million-query track (1MQ). Each query has at least 2 query terms. Also, each term is in the top-1M most frequently queried terms. 
For CC-News we use the freely-available inverted index by Mackenzie et al.~\cite{ccnews2020} in a \emph{Common Index File Format} (CIFF) \cite{ciff}, as well as their query log of 9{,}666 queries.
Table \ref{table:datasets-summary} shows a summary of statistics of the collections. 
In all cases, we only keep inverted lists with length at least 4096. 

\begin{table}[ht]
\caption{Dataset summary and average space usage (in bits per integer, bpi) for different compression measures and baseline representations.}
\label{table:datasets-summary}
\centering
\begin{tabular}{lrrr}
\toprule
                                        & \textsf{Gov2} & \textsf{ClueWeb09} & \textsf{CC-News}\\ 
\midrule
\# Lists      & 57,225        & 131,567 & 79,831                                  \\ 
\# Integers   & 5,509,206,378 & 14,895,136.282  & 18,415,151,585              \\ 
$u$        & 25,205,179    & 50,220,423   & 43,495,426                             \\ 
$\lceil\lg{u}\rceil$   & 25            & 26 & 26    \\ 
\midrule 
$\gap{\set}$        & 2.25          & 3.25      & 3,70          \\ 
$\rle{\set}$        & 1.99          & 3.33      & 4,23          \\ 
$\trie{\set}$       & 3.48          & 4.56      & 5,18          \\ 
$\trieRun{\set}$    & 2.51          & 4.00      & 5,12           \\
\midrule
Elias $\gamma$      & 3.71          & 5.74      & 6.81          \\
Elias $\delta$      & 3.64          & 5.40      & 6.69          \\
Fibonacci           & 3.90          & 5.35      & 6.09          \\
Elias $\gamma$ 128 & 4.07 &	6.10    & 7.05\\
Elias $\delta$ 128 & 4.00 & 5.77    & 7.17\\
Fibonacci 128      & 4.26 &	5.71    & 6.45\\
\texttt{rrr\_vector<>}   & 11.82 & 19.94  & 11.29\\
\texttt{sd\_vector<>}    & 8.45 & 8.52     & 7.17 \\
\bottomrule
\end{tabular}
\end{table}

\paragraph{Compression Measures and Baselines.}
Table \ref{table:datasets-summary} also shows the average bit per integer (bpi) for different compression measures on our tested set collections. 
We also show the average bpi for different integer compression approaches, namely Elias $\gamma$ and $\delta$ \cite{Elias1975}, Fibonacci \cite{FKdam96}, \textsf{rrr\_vector<>} \cite{RRR07}, and \texttt{sd\_vector<>} \cite{OSalenex07}, all of them from the \texttt{sdsl} library \cite{GPspe14}. In particular, Elias $\gamma$, $\delta$, and Fibonacci codes are know for yielding highly space-efficient set representations in IR \cite{BCC2010}, hence they are a strong baseline for comparison. We show a plain version of them, as well as variants with blocks of 128 integers. The latter are needed to speed up decoding. However, it is well known that these codes are relatively slow to be decoded, and hence yield higher intersection times.  
On the other hand, \texttt{sd\_vector<>} uses $n\lg{(u/n)} + 2n + o(n)$ bits to encode a set of $n$ elements from the universe $[0{..}u)$, which is close to the worst-case upper bound $\lowerB$. Finally, $\texttt{rrr\_vector<>}$ uses $\lowerB + o(u)$ bits of space. As it can be seen, the $o(u)$-bit term yields a space usage higher than the remaining alternatives.
These values will serve as a baseline to compare our results.

\paragraph{Indexes Tested.}
We compare our proposal with state-of-the-art set compression approaches. In particular, we use the code available at the project \emph{Performant Indexes and Search for Academia}\footnote{\url{https://github.com/pisa-engine/pisa}} (PISA) \cite{MSMS2019} for the following approaches:
\begin{description}
    \item[\texttt{IPC}:] the Binary Interpolative Coding approach by Moffat et al.~\cite{MSir00}. This is a highly space-efficient approach, with a relatively slow processing performance \cite{BCC2010,MSir00}.
    \item[\texttt{PEF Opt}:] the highly competitive approach by Ottaviano and Venturini \cite{OVsigir14}, which partitions the universe into non-uniform blocks and represents each block appropriately. 
    \item[\texttt{OptPFD}:] The Optimized PForDelta approach by Yan et al.~\cite{YDSwww09}.
    \item[\texttt{SIMD-BP128}:] The highly efficient approach by Lemire and Boytsov \cite{LBspe15}, aimed at decoding billions of integers per second using vectorization capabilities of modern processors.  
    \item[\texttt{Simple16}:] The approach by Zhang at al.~\cite{ZLSwww08}, a variant of the \texttt{Simple9} approach \cite{AMir05} that combines a relatively good space usage and an efficient intersection time. 
    \item[\texttt{Varint-G8IU}:] The approach by Stepanov et al.~\cite{SGREOcikm11}, using SIMD instructions to speed up set manipulation.
    \item[\texttt{VarintGB}:] The approach presented by Dean \cite{Dwsdm09}.
\end{description}    
We also compared with the following approaches, available from their authors:
\begin{description}
    \item[\texttt{Roaring}:] the compressed bitmap approach by Lemire et al.~\cite{LKKDOSKspe18}, which is widely used as an indexing tool on several systems and platforms \footnote{See, e.g., \href{https://roaringbitmap.org/}{https://roaringbitmap.org/}}. Roaring bitmaps are highly competitive, taking full advantage of modern CPU hardware architectures. We use the C++ code from the authors \footnote{\href{https://github.com/RoaringBitmap/CRoaring}{https://github.com/RoaringBitmap/CRoaring}.}
    
    \item[\texttt{RUP}:] The recent recursive universe partitioning approach by Pibiri \cite{Pdcc21}, using also SIMD instructions to speed up processing. We use the code from the author \footnote{\href{https://github.com/jermp/s_indexes}{https://github.com/jermp/s\_indexes}}.
\end{description}

\subsection{Experimental Intersection Queries}

Table \ref{table:experimental-results} shows the average experimental intersection time and space usage (in bits per integer) for all the alternatives tested. 
\begin{table}
\caption{Average intersection time and space usage (in bits per integer) for all alternatives tested.}
\label{table:experimental-results}
\centering
\begin{tabular}{lrrrrrrrr}
\toprule
  & \multicolumn{2}{c}{\textsf{Gov2}} &  \multicolumn{2}{c}{\textsf{ClueWeb09}} & \multicolumn{2}{c}{\textsf{CC-News}} \\
\cline{2-3} \cline{4-5} \cline{6-7} 
Data Structure            &  Space  &  Time  & Space  &  Time & Space  &  Time\\
\midrule
\texttt{IPC}              & 3.34  &    8.66    & 5.15 &   30.18  & 5.87	& 68.98\\ 
\texttt{Simple16}         & 4.65  &    2.44    & 6.72 &   8.66  & 6.88	& 19.74\\
\texttt{OptPFD}           & 4.07  &    2.15    & 6.28 &   7.79 & 6.50   & 11.80   \\ 
\texttt{PEF Opt}          & 3.62  &    1.88    & 5.85 &   6.50 & 5.80   & 17.33   \\ 
\texttt{VarintGB}         & 10.80 &    1.43    & 11.40&   7.34 & 11.04  & 12.38 \\
\texttt{Varint-G8IU}      & 9.97  &    1.38    & 10.55&   5.25 & 10.24  & 12.09\\
\texttt{SIMD-BP128}       & 6.07  &    1.29    & 8.98 &   4.47 & 7.36   & 15.90\\ 
\texttt{Roaring}          & 8.77  &    1.09    & 12.62&   3.75 & 9.86   & 5.56\\
\texttt{RUP}              & 5.04  &   1.10    & 8.44 &    4.27  & 8.41  & 5.44 \\   
\midrule
\texttt{trie} (\texttt{v5})  & 5.18 &  1.21  & 7.46 & 2.81  & 8.77 & 8.72\\
\texttt{trie} (\texttt{IL})  & 5.41 &  1.06  & 7.83 &  2.42 & 9.30  & 7.46\\
\texttt{trie} (\texttt{v})   & 5.85 &  0.77  & 8.50 & 1.64  & 9.99 & 5.21\\
\midrule
\texttt{rTrie} (\texttt{v5})  & 4.22 &  1.22 & 6.95 & 3.07 & 8.73 & 9.74\\
\texttt{rTrie} (\texttt{IL})  & 4.42 &  1.10 & 7.31 &  2.62 & 9.16 & 8.13\\
\texttt{rTrie} (\texttt{v})   & 4.81 &  0.77 & 7.96 &  1.96 & 9.95 & 6.09\\
\bottomrule
\end{tabular}
\end{table}
Figure \ref{fig:experimental-query-time} shows the same results, using space vs.~time plots.
\begin{figure}[ht]
     \centering
        \begin{tikzpicture}[scale=0.72]
        \begin{axis}[
        legend columns = 3,
        title = {\sf Intersection Queries, Gov2},
        xlabel={Space [bpi]},  
        legend style={nodes={scale=0.8, transform shape,right},
        },
        ylabel={Intersection time [millisecs]},
        xmin=0, xmax=11.2,
        ymin=0, ymax=9
        ]
        \addplot+[
            mark size = 4pt,
            only marks,
            color=black,
            mark=pentagon
        ]
        coordinates {(3.34, 8.66)};
        
        \addplot+[
            mark size = 4pt,
            only marks,
            color=black,
            mark=square
        ]
        coordinates {(4.65, 2.44)};
        
        \addplot+[
            mark size = 4pt,
            only marks,
            color=black,
            mark=o
        ]
        coordinates {(4.07, 2.15)};
        
        \addplot+[
            mark size = 4pt,
            only marks,
            color=black,
            mark=diamond
        ]
        coordinates {(3.62, 1.88)};
        
        \addplot+[
            mark size = 4pt,
            only marks,
            color=black,
            mark=otimes
        ]
        coordinates {(10.80, 1.43)};
        
        \addplot+[
            mark size = 4pt,
            only marks,
            color=black,
            mark=oplus
        ]
        coordinates {(9.97, 1.38)};
        
        \addplot+[
            mark size = 4pt,
            only marks,
            color=black,
            mark=x
        ]
        coordinates {(6.07, 1.29)};
        
        \addplot+[
            mark size = 4pt,
            only marks,
            color=black,
            mark=+
        ]
        coordinates {(8.77, 1.09)};
        
        \addplot+[
            mark size = 4pt,
            only marks,
            color=black,
            mark=star
        ]
        coordinates {(5.04, 1.10)};
        
        \addplot+[
            mark size = 4pt,
            only marks,
            color=red,
            mark=triangle
        ]
        coordinates {(4.22, 1.22)};
        
        \addplot+[
            mark size = 4pt,
            only marks,
            color=blue,
            mark=triangle
        ]
        coordinates {(4.42, 1.10)};
        
        \addplot+[
            mark size = 4pt,
            only marks,
            color=teal,
            mark=triangle
        ]
        coordinates {(4.81, 0.77)};

        \addplot+[
            mark size = 4pt,
            only marks,
            color=red,
            mark=triangle,
            every mark/.append style={rotate=180}
        ]
        coordinates {(5.18, 1.21)};
        
        \addplot+[
            mark size = 4pt,
            only marks,
            color=blue,
            mark=triangle,
            every mark/.append style={rotate=180}
        ]
        coordinates {(5.41, 1.06)};
        
        \addplot+[
            mark size = 4pt,
            only marks,
            color=teal,
            mark=triangle,
            every mark/.append style={rotate=180}
        ]
        coordinates {(5.85, 0.77)};
        \end{axis}
    \end{tikzpicture}
    \hfill
        \begin{tikzpicture}[scale=0.72]
        \begin{axis}[
        legend columns = 3,
        title = {\sf Intersection Queries, ClueWeb09},
        xlabel={Space [bpi]},
        legend style={nodes={scale=0.8, transform shape, right}},
        ylabel={Intersection time [millisecs]},
        xmin=0, xmax=13,
        ymin=0, ymax=32
        ]
        \addplot+[
            mark size = 4pt,
            only marks,
            color=black,
            mark=pentagon
        ]
        coordinates {(5.15, 30.18)};
        
        \addplot+[
            mark size = 4pt,
            only marks,
            color=black,
            mark=square
        ]
        coordinates {(6.72, 8.66)};
        
        \addplot+[
            mark size = 4pt,
            only marks,
            color=black,
            mark=o
        ]
        coordinates {(6.28, 7.79)};
        
        \addplot+[
            mark size = 4pt,
            only marks,
            color=black,
            mark=diamond
        ]
        coordinates {(5.85, 6.50)};
        
        \addplot+[
            mark size = 4pt,
            only marks,
            color=black,
            mark=otimes
        ]
        coordinates {(11.40, 7.34)};
        
        \addplot+[
            mark size = 4pt,
            only marks,
            color=black,
            mark=oplus
        ]
        coordinates {(10.55, 5.25)};
        
        \addplot+[
            mark size = 4pt,
            only marks,
            color=black,
            mark=x
        ]
        coordinates {(8.98, 4.37)};
        
        \addplot+[
            mark size = 4pt,
            only marks,
            color=black,
            mark=+
        ]
        coordinates {(12.62, 3.75)};
        
        \addplot+[
            mark size = 4pt,
            only marks,
            color=black,
            mark=star
        ]
        coordinates {(8.44, 4.27)};
        
        \addplot+[
            mark size = 4pt,
            only marks,
            color=red,
            mark=triangle
        ]
        coordinates {(6.95, 3.07)};
        
        \addplot+[
            mark size = 4pt,
            only marks,
            color=blue,
            mark=triangle
        ]
        coordinates {(7.31, 2.62)};
        
        \addplot+[
            mark size = 4pt,
            only marks,
            color=teal,
            mark=triangle,
        ]
        coordinates {(7.96, 1.96)};
        
        \addplot+[
            mark size = 4pt,
            only marks,
            color=red,
            mark=triangle,
            every mark/.append style={rotate=180}
        ]
        coordinates {(7.46, 2.81)};
        
        \addplot+[
            mark size = 4pt,
            only marks,
            color=blue,
            mark=triangle,
            every mark/.append style={rotate=180}
        ]
        coordinates {(7.83, 2.42)};

        \addplot+[
            mark size = 4pt,
            only marks,
            color=teal,
            mark=triangle,
            every mark/.append style={rotate=180}
        ]
        coordinates {(8.50, 1.64)};
        \end{axis}
    \end{tikzpicture}
        \begin{tikzpicture}[scale=0.72]
        \centering
        \begin{axis}[
        legend columns = 3,
        title = {\sf Intersection Queries, CC-News},
        xlabel={Space [bpi]},  
        legend style={
            nodes={scale=0.8, transform shape,right},
            at={(0.5,-0.5)},
            anchor=south,
            legend columns=3,
            column sep=0.2cm
        },
        ylabel={Intersection time [millisecs]},
        xmin=0, xmax=11.5,
        ymin=0, ymax=72.0
        ]
        \addplot+[
            mark size = 4pt,
            only marks,
            color=black,
            mark=pentagon
        ]
        coordinates {(5.87, 68.98)};
        \addlegendentry{IPC}
        
        \addplot+[
            mark size = 4pt,
            only marks,
            color=black,
            mark=square
        ]
        coordinates {(6.88, 19.74)};
        \addlegendentry{Simple16}
        
        \addplot+[
            mark size = 4pt,
            only marks,
            color=black,
            mark=o
        ]
        coordinates {(6.50, 11.80)};
        \addlegendentry{OptPFD}
        
        \addplot+[
            mark size = 4pt,
            only marks,
            color=black,
            mark=diamond
        ]
        coordinates {(5.80, 17.33)};
        \addlegendentry{PEF Opt}
        
        \addplot+[
            mark size = 4pt,
            only marks,
            color=black,
            mark=otimes
        ]
        coordinates {(11.04, 12.38)};
        \addlegendentry{VarintGB}
        
        \addplot+[
            mark size = 4pt,
            only marks,
            color=black,
            mark=oplus
        ]
        coordinates {(10.24, 12.09)};
        \addlegendentry{Varint-G8IU}
        
        \addplot+[
            mark size = 4pt,
            only marks,
            color=black,
            mark=x
        ]
        coordinates {(7.36, 15.90)};
        \addlegendentry{SIMD-BP128}
        
        \addplot+[
            mark size = 4pt,
            only marks,
            color=black,
            mark=+
        ]
        coordinates {(9.86, 5.56)};
        \addlegendentry{Roaring}
        
        \addplot+[
            mark size = 4pt,
            only marks,
            color=black,
            mark=star
        ]
        coordinates {(8.41, 5.44)};
        \addlegendentry{RUP}
        
        \addplot+[
            mark size = 4pt,
            only marks,
            color=red,
            mark=triangle
        ]
        coordinates {(8.73, 9.74)};
        \addlegendentry{rTrie v5}
        
        \addplot+[
            mark size = 4pt,
            only marks,
            color=blue,
            mark=triangle
        ]
        coordinates {(9.16, 8.13)};
        \addlegendentry{rTrie IL 512}
        
        \addplot+[
            mark size = 4pt,
            only marks,
            color=teal,
            mark=triangle
        ]
        coordinates {(9.95, 6.09)};
        \addlegendentry{rTrie v}

        \addplot+[
            mark size = 4pt,
            only marks,
            color=red,
            mark=triangle,
            every mark/.append style={rotate=180}
        ]
        coordinates {(8.77, 8.72)};
        \addlegendentry{trie v5}
        \addplot+[
            mark size = 4pt,
            only marks,
            color=blue,
            mark=triangle,
            every mark/.append style={rotate=180}
        ]
        coordinates {(9.30, 7.46)};
        
        \addlegendentry{trie IL 512}
        \addplot+[
            mark size = 4pt,
            only marks,
            color=teal,
            mark=triangle,
            every mark/.append style={rotate=180}
        ]
        coordinates {(9.99, 5.21)};
        \addlegendentry{trie v}
        \end{axis}
    \end{tikzpicture}

    \caption{Space vs.~time trade-off for all alternative tested on the 3 datasets.}
    \label{fig:experimental-query-time}
\end{figure}
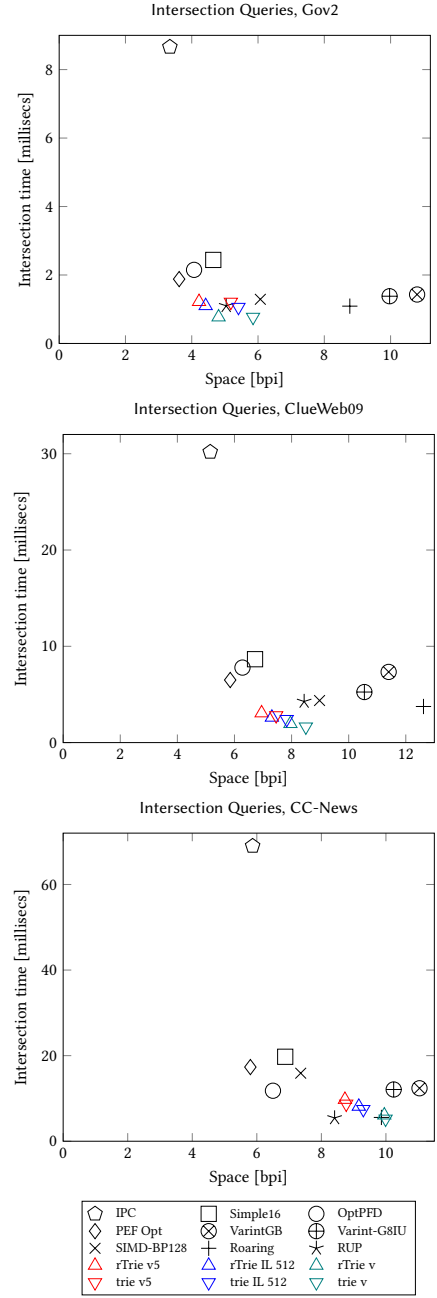
As it can be seen, our approaches introduce relevant trade-offs. 
We compare next with the most competitive approaches on the different datasets. 
For \textsf{Gov2}, \texttt{rTrie} uses 1.166--1.329 times the space of \texttt{PEF}, the former being 1.549--2.442 times faster.
\texttt{rTrie} uses 0.481--0.548 times the space of \texttt{Roaring}, the former being 
up to 1.415 times faster. Finally, \texttt{rTrie} uses 0.837--0.954 times the space of \texttt{RUP}, the former being 
up to 1.428 times faster.
For \textsf{ClueWeb09}, \texttt{rTrie} uses 1.188--1.361 times the space of \texttt{PEF}, the former being 2.117--3.316 times faster. Also, \texttt{rTrie} uses 0.551--0.631 times the space of \texttt{Roaring}, the former being 1.221--1.913 times faster.
Finally, \texttt{rTrie} uses 0.823--0.943 times the space of \texttt{RUP}, the former being 1.391--2.178 times faster.
For \textsf{CC-News}, the resulting inverted index has considerably less runs in the inverted lists, hence the space usage of \texttt{trie} and \texttt{rTrie} is about the same. However, \texttt{trie} is faster than \texttt{rTrie}, as the code to handle runs introduces an overhead that does not pay off in this case. 
So, we will use \texttt{trie} to compare here. It uses 1.512--1.722 times the space of \texttt{PEF}, the former being 1.987--3.326 times faster.
\texttt{trie} uses 0.889--1.013 times the space of \texttt{Roaring}, the former being 
up to 1.067 times faster.
Finally, \texttt{trie} uses 1.043--1.188 times the space of \texttt{RUP}, the former being 
up to 1.044 times faster.

We can conclude that in all tested datasets, at least one of our trade-offs is the fastest, outperforming the highly-engineered ultra-efficient set compression techniques we tested. 

\section{Conclusions} \label{sec:conclusions}

We conclude that Trabb-Pardo's intersection algorithm \cite{TrabbPardo} implemented using compact binary tries yields a solution to the offline set intersection problem that is appealing both in theory and practice. Namely, our proposal has: (1)  theoretical guarantees of compressed space usage, (2) adaptive intersection computation time, and (3) highly competitive practical performance. 
Regarding our experimental results, we show that our algorithm computes intersections 1.55--3.32 times faster than highly-competitive Partitioned Elias-Fano indexes \cite{OVsigir14}, using 1.15--1.72 times their space. We also compare with Roaring Bitmaps \cite{LKKDOSKspe18}, being 1.07--1.91 times faster while using about 0.48--1.01 times their space. Finally, we compared to RUP \cite{Pdcc21}, being 1.04--2.18 times faster while using 0.82--1.19 times its space. 

After this work, multiple avenues for future research are open. For instance, novel data structures supporting operations $\rank_{\bit{1}}$ and $\select_{\bit{1}}$ have emerged recently \cite{Kspire22}. These offer interesting trade-offs, using much less space than then ones we used in our implementation, with competitive operation times. We think these can improve our space usage significantly, keeping our competitive query times. Another interesting line is that of alternative  binary trie compact representations. E.g., a DFS representation (rather than BFS, as the one used in this paper), which would potentially reduce the number of cache misses when traversing the tries. Finally, our representation would support dynamic sets (where insertion and deletion of elements are allowed) if we use dynamic binary tries \cite{ADRalgo16}

\begin{acks}
This work was funded by ANID – Millennium Science Initiative Program – Code ICN$17\_002$, Chile (both authors). We thank Gonzalo Navarro, Adri\'an G\'omez-Brand\'on, and Francesco Tosoni for enlightening  comments, suggestions, and discussions about this work.
\end{acks}


\balance

\bibliographystyle{ACM-Reference-Format}

\end{document}